\documentclass[a4paper,UKenglish,cleveref,autoref,thm-restate,numberwithinsect]{lipics-v2021}

\nolinenumbers

\usepackage{xspace}
\usepackage{xcolor}
\usepackage{array}
\usepackage{algorithm}
\usepackage[noend]{algpseudocode}
\usepackage{scalefnt}
\usepackage{wrapfig}
\usepackage[most]{tcolorbox}

\algdef{SE}[Receiving]{Receiving}{EndReceiving}[1]{\textbf{upon
		receiving}\ #1\ \algorithmicdo}{\algorithmicend\ \textbf{}}%
\algtext*{EndReceiving}
\algdef{SE}[Upon]{Upon}{EndUpon}[1]{\textbf{upon}\ #1\ \algorithmicdo}{\algorithmicend\ \textbf{}}%
\algtext*{EndUpon}

\algrenewcommand\textproc{}

\newcommand{\mypara}[1]{\smallskip\noindent\textbf{#1.}}
\newcommand{\temph}[1]{\emph{#1}}
\newcommand{\remove}[1]{}

\newcommand{\calF}{\mathcal{F}}

\newcommand{\calN}{\mathbb{N}}
\newcommand{\calA}{\mathcal{A}}

\newcommand{\calS}{\mathcal{S}}

\newcommand{\calB}{\mathcal{B}}

\newcommand{\ia}{\textit{i}}
\newcommand{\ib}{\textit{ii}}
\newcommand{\ic}{\textit{iii}}
\newcommand{\id}{\textit{iv}}
\newcommand{\iie}{\textit{v}}

\newcounter{pc}

\usepackage{enumitem}
\setlist{nosep, leftmargin=*}
\setlist{itemsep=1pt, topsep=3pt, leftmargin=*}

\title{Grassroots Bonds as a Foundation for Market Liquidity}
\titlerunning{Grassroots Bonds}

\author{Ehud Shapiro}{London School of Economics, UK, and Weizmann Institute of Science, Israel}{e.shapiro2@lse.ac.uk}{https://orcid.org/0000-0002-6030-106X}{}
\authorrunning{E. Shapiro}

\Copyright{Ehud Shapiro}

\ArticleNo{0}

\ccsdesc[500]{Applied computing~Digital cash}
\ccsdesc[500]{Theory of computation~Distributed computing models}
\ccsdesc[300]{Computer systems organization~Peer-to-peer architectures}

\keywords{Grassroots Bonds, Grassroots Cryptocurrencies, Digital Social Contracts, Financial Instruments, Liquidity, Volitional Atomic Transactions, GLP}

\begin{document}

\maketitle

\begin{abstract}
Global cryptocurrencies are unbacked and have high transaction cost incurred by global consensus. In contrast, grassroots cryptocurrencies are backed by the goods and services of their issuers---any person, natural or legal---and have no transaction cost beyond operating a smartphone. Liquidity in grassroots cryptocurrencies arises from mutual credit via coin exchange among issuers.  However, as grassroots coins are redeemable 1-for-1 against any other grassroots coin, the credit-forming exchange must also be 1-for-1, lest prompt redemption after exchange would leave the parties with undue profit or loss.  Thus,  grassroots coins are incongruent with liquidity through interest-bearing credit.

Here we introduce grassroots bonds, which extend grassroots coins with a maturity date, reframing grassroots coins---cash---as mature grassroots bonds.  Bond redemption generalises coin redemption, allowing the lending of liquid coins in exchange for interest-bearing future-maturity bonds.  We show that digital social contracts---voluntary agreements among persons, specified, fulfilled, and enforced digitally---can express the full gamut of financial instruments as the voluntary swap of grassroots bonds, including loans, sale of debt, forward contracts, options, and escrow-based instruments, and that classical liquidity ratios are applicable just as well to grassroots bonds.  Grassroots bonds may thus allow local digital economies to form and grow without initial capital or external credit, harnessing mutual trust within communities into liquidity.

The formal specification presented here was implemented in GLP, a concurrent logic programming language running on Dart for smartphone deployment. The implementation is illustrated by a running multiagent village market scenario in GLP.
\end{abstract}

\newpage
\setcounter{page}{1}


\section{Introduction}\label{sec:introduction}

\mypara{Grassroots currencies}
Grassroots coins~\cite{shapiro2024gc} are units of debt that can be issued and traded digitally by any person---natural or legal---including people, communities, cooperatives, corporations, banks, municipalities and governments.  Grassroots currencies are similar to fiat currencies in that scarcity is controlled by the issuer rather than by a protocol, in contrast to global cryptocurrencies such as Bitcoin~\cite{nakamoto2008peer} or Ethereum~\cite{buterin2014next}.  They are similar to `inside money'~\cite{cavalcanti1999inside,cavalcanti1999model}---a medium of exchange backed by private credit---in that they are backed by claims on the issuer.

Grassroots currencies aimed to provide a foundation for grassroots digital economies, allowing them to emerge without initial capital or external credit, operating solely on the networked smartphones of their members,  and gradually merge into a global digital economy.  As such, grassroots currencies may empower economically-deprived communities worldwide, and by harnessing mutual trust within communities help `banking the unbanked'~\cite{bruhn2009economic,agarwal2017banking}.  These abilities may be most relevant in places where capital and external credit are scarce but trust within families and communities is abundant.

\mypara{Value and redemption}
Grassroots currencies are endowed with value by their issuer undertaking to redeem them against their offerings: Goods and services for people and corporations; interest and transaction fees for banks; and taxes for cities and states.  Critically, the offerings of each person must also include all the grassroots coins it holds, including coins issued by other persons.  Upon request, a person must redeem any coin it has issued, 1-for-1, against any coin it holds.

Coin redemption is key for a digital economy based on grassroots currencies to function, as it
(\ia) makes coins issued by the same person fungible;
(\ib) resolves doublespending~\cite{lewis2023grassroots};
(\ic) allows credit to be settled and extinguished;
(\id) allows chain payments across mutually-liquid currencies; and
(\iie) pegs mutually-liquid grassroots currencies at a 1-for-1 exchange rate.

\mypara{Liquidity through mutual credit}
As each person prices their goods and services in terms of their own currency,  trade is impossible if no person holds another person's coins.  This can be remedied by mutual credit lines, formed by the voluntary exchange of grassroots coins among persons that know and trust each other---family members, friends, peers, and colleagues; a community bank and its members; a corporation and its owners, employees, suppliers, and customers.  Liquidity arises from persons holding each other's coins, and is correlated with the amount and distribution of coins in circulation.  A grassroots coin in circulation is an outstanding obligation by its issuer to its holder, and the amount of grassroots coins in circulation in a digital economy is a measure of the trust among its members.

\mypara{The limitation of coins}
Since grassroots coins can be redeemed 1-for-1, mutual credit must also be 1-for-1, lest closing it right after formation would leave the parties with undue profit or loss.
Concretely, if $p$ opens a mutual line of credit with $q$ with an upside for $p$, e.g.\ 100 $p$-coins against 110 $q$-coins, then coin redemption would allow $p$ to immediately redeem the 100 $q$-coins it holds, leaving $p$ with 10 $q$-coins and $q$ with nothing; this is as if a bank could grant a loan with an upfront interest payment then immediately recall the loan while keeping the interest.  More generally, 1-for-1 redemption precludes associating interest with mutual credit lines denominated in grassroots currencies.

\mypara{Grassroots bonds}
Here we present grassroots bonds, which extend grassroots coins with a maturity date, reframing grassroots coins---cash---as mature grassroots bonds.  Coin-for-bond redemption generalises coin-for-coin redemption, allowing the lending of liquid coins in exchange for interest-bearing future-maturity bonds, facilitating incentive-based liquidity.  

\mypara{Implementation}
We implement the grassroots bonds specification in GLP~\cite{shapiro2025glp}, a multiagent concurrent logic programming language running on Dart for smartphone deployment.  The implementation is demonstrated by a running six-agent village market scenario that exercises symmetric mutual credit, asymmetric credit with interest, payments, portfolio swaps, escrow, redemption, and sale of debt.  The correctness of the implementation depends on GLP being secure; a secure implementation of GLP, in which agents verify each other to execute the agreed code using mutual attestations, is ongoing work to be reported in a companion paper.

\mypara{Paper structure}
This paper presents grassroots bonds, develops financial instruments and liquidity measures over them, and presents a GLP implementation.
Section~\ref{sec:foundations} introduces guarded multiagent atomic transactions, the formal framework adopted from~\cite{lewis2026volitional}.
Section~\ref{sec:bonds} presents grassroots bonds and their specification as a digital social contract.
Section~\ref{sec:instruments} develops financial instruments as voluntary bond swaps.
Section~\ref{sec:escrow} adds escrow-based instruments.
Section~\ref{sec:liquidity} applies classical liquidity ratios to grassroots bonds.
Section~\ref{sec:community-bank} develops a community bank scenario.
Section~\ref{sec:implementation} implements the bonds social contract in GLP.
Section~\ref{sec:related-work} compares with global cryptocurrencies, DeFi and classical finance.
Section~\ref{sec:conclusion} concludes.
Appendix~\ref{app:agent} presents the bond agent GLP source code; Appendix~\ref{app:liquidity-example} works the liquidity ratios on the village scenario.
\section{Guarded Multiagent Atomic Transactions}\label{sec:foundations}

We assume a potentially infinite set of \emph{agents} $\Pi$, but consider only finite subsets of it, so when referring to a particular set of agents $P \subset \Pi$ we assume $P$ to be nonempty and finite.  We use $\subset$ to denote the strict subset relation and $\subseteq$ when equality is also possible.  As standard, we use $S^P$ to denote the set of all total functions from $P$ to $S$, and if $c\in S^P$ we use $c_p$ to denote the value of $c$ at $p\in P$.

\begin{definition}[Machine State, Configuration, Transaction, Guarded Transaction]\label{def:mt}
Given an arbitrary set $S$ of \temph{machine states}, with a designated \temph{initial state} $s_0 \in S$, and agents $Q \subset \Pi$, a \temph{machine configuration} over $Q$ is a member of $S^Q$, and a \temph{machine transaction} over \temph{participants} $Q$ is a pair $c\rightarrow c' \in (S^Q)^2$ such that $c\ne c'$. Given such a machine transaction $t$, a \temph{guarded transaction} over $t$ is a pair $(t,Q')$ where $Q'\subseteq Q$ are its \temph{guards}.
\end{definition}

Machine transactions are atomic and asynchronous~\cite{shapiro2021multiagent}---they can be carried out by their participants at any time, regardless of the states of non-participants.  Participants include both active agents (whose state changes) and stationary agents (whose state is a precondition but does not change).  Guarded transactions are machine transactions that can be carried out only if their guards $Q'\subseteq Q$ are willing.  When we say a transaction is ``guarded by $\{p,q\}$,'' both must be willing; when we say it is ``guarded by either $p$ or $q$,'' we mean there are two guarded transactions over the same machine transaction, $(t,\{p\})$ and $(t,\{q\})$, so that either person's volition suffices.

The rest of this section provides the formal basis for this informal explanation, following~\cite{lewis2026volitional}.  Readers content with the informal account of guarded transactions may skip directly to the next Section~\ref{sec:bonds}, which employs guarded transactions to define bonds.

\subsection{Volitional Transactions}\label{sec:foundations-volitional}

Distinct machine transactions can represent ``the same action'' in different configurations; we capture this with an equivalence relation on machine transactions.

\begin{definition}[Transaction Equivalence]\label{def:equivalence}
Given a set of machine transactions $R$, a \temph{transaction equivalence} is an equivalence relation $\sim$ on $R$ such that $t \sim t'$ implies $t$ and $t'$ have the same participants.  We write $[t]$ for the equivalence class of $t$ under $\sim$.
\end{definition}

\begin{definition}[Agent State and Configuration]\label{def:agent-state}
Given agents $P$, states $S$ with initial state $s_0$, a set of machine transactions $T$ each over its own participants $Q\subseteq P$ and $S$, and equivalence $\sim$ on $T$, an \temph{agent state} is a pair $(V,m)\in \calA = (2^{T/\sim}  \times S)$ where  $V$ is its \temph{volitional state} and $m \in S$ its  \temph{machine state}.  The \temph{initial agent state} is $(\emptyset,s_0)$.  An \temph{agent configuration} $c$ over $P$, $S$, $T$, and $\sim$ is a member $c\in \calA^P$, in which case we write $c^v_p$  for the volitional state and $c^m_p$ for the machine state of agent $p$ in $c$.
\end{definition}

\begin{definition}[Volitional Multiagent Atomic Transaction]\label{def:vmat}
Given agents $P$, states $S$, machine transactions $T$ over $P$ and $S$, and equivalence $\sim$ on $T$:
\begin{enumerate}
    \item A \temph{change-volition transaction of agent $p\in P$} is a pair $c\rightarrow c'$ of agent configurations over $\{p\}$, $S$, $T$, and $\sim$ such that $c^v_p, c'{^v_p} \subseteq T/\sim$ and $c^v_p \ne c'{^v_p}$, and $c^m_p = c'{^m_p}$.
    \item A \temph{volitional machine transaction} induced by a guarded machine transaction $(t,Q')$, for some $t= (d\rightarrow d')\in T$ over $Q'\subseteq Q\subseteq P$, is a pair $c\rightarrow c'$ where $c\ne c'$ are agent configurations over $P$, $S$, $T$, and $\sim$ such that $[t]\in c^v_q$ for every $q\in Q'$; $c^m_p = d_p$ and $c'{^m_p} = d'_p$ for every $p\in Q$; $c^m_p = c'{^m_p}$ for every $p\in P\setminus Q$; and $c'{^v_p} = c^v_p \setminus \{[t]\}$ for every $p\in P$.
    \item A \temph{volitional multiagent atomic transaction} is a change-volition transaction or a volitional machine transaction.
\end{enumerate}
\end{definition}

When a volitional machine transaction induced by $(t,Q')$ is taken, the class $[t]$ is removed from every agent's volitional state.  Volitions are thus discharged upon satisfaction---a person wills a class of transactions, and once any equivalent transaction is taken, the will is fulfilled and the class is removed.

\subsection{Transition Systems and Protocols}\label{sec:foundations-ts}

\begin{definition}[Transition System, Computation, Run]\label{def:ts}
A \temph{transition system} is a tuple $TS=(S,s_0,T)$, where $S$ is an arbitrary non-empty set of \temph{states}, $s_0\in S$ is the designated \temph{initial state}, and $T\subseteq S^2$ is a set of \temph{transitions}, where each transition $t\in T$ is a pair $(s,s')$ of non-identical states $s\ne s'\in S$, also written as $t=s\rightarrow s'$.  A \temph{computation} of $TS$ is a (nonempty, potentially infinite) sequence of states $r= s_1,s_2,\cdots$ such that for every two consecutive states $s_i,s_{i+1} \in r$, $s_i\rightarrow s_{i+1} \in T$.  If $s_1=s_0$ then the computation is called a \temph{run} of $TS$.
\end{definition}

\begin{definition}[Multiagent Transition System]\label{def:mts}
Given agents $P \subset \Pi$ and an arbitrary set $S$ of \temph{states} with a designated \temph{initial state} $s_0\in S$, a \temph{multiagent transition system} over $P$ and $S$ is a transition system $TS= (C,c_0,T)$ with \temph{configurations} $C:= S^P$, \temph{initial configuration}  $c_0:= \{s_0\}^P$, and \temph{transitions} $T\subseteq C^2$ a set of transactions over $P$ and $S$.
\end{definition}

A transaction and a transition are structurally identical---both are pairs of configurations---but differ in their role: a transaction is specified over its participants $Q$, the agents whose states are preconditions for the transaction to occur, and says nothing about agents outside $Q$.  A transition, by contrast, is over a fixed set of agents $P$.  Given a set of transactions, each over its own set of participants, the closure operator induces from them a set of transitions over a fixed $P$, in which non-participants remain stationary.

\begin{definition}[Transaction Closure]\label{def:closure}
Let $P\subset \Pi$, $S$ a set of machine states, and $C:=S^P$.
For any transition or transaction $t = c\to c'$, we write $t_q := c_q\to c'_q$ and say $p$ is \temph{stationary} in $t$ if $c_p = c'_p$.
For a machine transaction $t=(c\rightarrow c')$ over $S$ with participants $Q$, the \temph{$P$-closure of $t$}, $t{\uparrow}P$, is the set of transitions over $P$ and $S$ defined by:
$$
t{\uparrow}P := \begin{cases} \{ t' \in C^2  :
\forall q\in Q.(t_q = t'_q) \wedge \forall p\in P\setminus Q.(p\text{ is stationary in }t')\} & \text{if } Q\subseteq P \\
\emptyset & \text{otherwise}
\end{cases}
$$
If $R$ is a set of machine transactions, each $t\in R$ over some $Q$ and $S$, then the \temph{$P$-closure of $R$}, $R{\uparrow}P := \bigcup_{t\in R} t{\uparrow}P$.
\end{definition}

\begin{definition}[Volitional Multiagent Transition System]\label{def:vmts}
Given agents $P\subset \Pi$, machine states $S$ with initial state $s_0$, a set $R$ of guarded machine transactions such that every $(t,Q')\in R$ has the participants of $t$ contained in $P$, and an equivalence $\sim$ on the set $T_R := \{t : (t,Q')\in R\text{ for some }Q'\}$ of underlying machine transactions, the \temph{volitional multiagent transition system induced by $(S,R,{\sim})$ over $P$} is the multiagent transition system $(\calA^P,c_0,T_V)$ where:
\begin{enumerate}
    \item $\calA := 2^{T_R/\sim} \times S$ is the \temph{agent state space};
    \item $c_0 \in \calA^P$ is the \temph{initial agent configuration}, with $c_0{^v_p}=\emptyset$ and $c_0{^m_p}=s_0$ for every $p\in P$;
    \item $T_V$ consists of all transitions $e\to e'\in (\calA^P)^2$ of one of two forms: (\ia)~a \temph{change-volition} of some $p\in P$---$e^v_p, e'{^v_p}\subseteq T_R/\sim$ and $e^v_p \ne e'{^v_p}$, $e^m_p = e'{^m_p}$, and $e_r = e'_r$ for every $r\in P\setminus\{p\}$; or (\ib)~a \temph{volitional machine transaction} induced by some guarded machine transaction $(t,Q')\in R$ per Definition~\ref{def:vmat}(2).
\end{enumerate}
\end{definition}

\begin{definition}[Enabled]\label{def:enabled}
A guarded transaction $(t,Q')$ with $t = d \rightarrow d'$ over $Q' \subseteq Q\subseteq P$ and $S$ is \temph{enabled} in agent configuration $c$ over $P$ if $c^m_p = d_p$ for every $p \in Q$, and $[t] \in c^v_q$ for every $q \in Q'$.  An equivalence class $[t]$ is \temph{enabled} in $c$ if some guarded $(t',Q')$ with $t' \in [t]$ is enabled in $c$.
\end{definition}

\begin{definition}[Correct Run]\label{def:correct}
Given a set $R$ of guarded machine transactions with equivalence $\sim$ on the underlying machine transactions $T_R = \{t : (t,Q')\in R\text{ for some }Q'\}$, a run $r$ is \temph{correct} if no class $[t] \in T_R/{\sim}$ is enabled in some suffix of $r$ with no member of the class taken in the suffix.
\end{definition}

\subsection{Grassroots Protocols}\label{sec:foundations-grassroots}

A protocol is a family of multiagent transition systems, one for each set of agents $P\subset \Pi$, which share an underlying set of local machine states $\calS$ with a designated initial state $s_0$.
A \emph{local-states function} maps every set of agents $P \subset \Pi$ to an arbitrary set of local machine states $S(P)\subset \calS$ that includes $s_0$ and satisfies $P\subset P' \subset \Pi \implies S(P) \subset S(P')$.

\begin{definition}[Protocol]\label{def:protocol}
A \temph{protocol} $\calF$ over a local-states function $S$ is a family of multiagent transition systems that has exactly one transition system $\calF(P) = (C(P),c_0(P),T(P))$ for every $P \subset \Pi$, with configurations $C(P)$ and initial configuration $c_0(P)\in C(P)$ determined by the protocol, such that $P\subseteq P' \subset \Pi$ implies $C(P)\subseteq C(P')$ and $c_0(P)_p = c_0(P')_p$ for every $p\in P$.
\end{definition}

\begin{definition}[Interleaving]\label{def:interleaving}
Let $P, P' \subset \Pi$ be disjoint nonempty sets of agents, $r = c_0, c_1, \ldots$ a run of $\calF(P)$, and $r' = d_0, d_1, \ldots$ a run of $\calF(P')$.  An \temph{interleaving} of $r$ and $r'$ is a sequence $e_0, e_1, \ldots$ of configurations in $C(P \cup P')$ for which there exist non-decreasing sequences of indices $(i_k)_{k \geq 0}$ and $(j_k)_{k \geq 0}$ with $i_0 = j_0 = 0$ such that for every $k \geq 0$: $(e_k)_p = (c_{i_k})_p$ for every $p \in P$; $(e_k)_q = (d_{j_k})_q$ for every $q \in P'$; and if $e_{k+1}$ exists, then exactly one of: (\ia) $i_{k+1} = i_k + 1$ and $j_{k+1} = j_k$ (a $P$-step), or (\ib) $i_{k+1} = i_k$ and $j_{k+1} = j_k + 1$ (a $P'$-step).
\end{definition}

\begin{definition}[Oblivious, Interactive, Grassroots]\label{def:grassroots}
A protocol $\calF$ is:
\begin{enumerate}
    \item \temph{oblivious} if for every disjoint nonempty $P, P' \subset \Pi$, every interleaving of a correct run of $\calF(P)$ and a correct run of $\calF(P')$ is a correct run of $\calF(P\cup P')$.
    \item \temph{interactive} if for every disjoint nonempty $P, P' \subset \Pi$, there exists a correct run $\hat{r}$ of $\calF(P\cup P')$ such that for every correct run $r$ of $\calF(P)$, every correct run $r'$ of $\calF(P')$, and every interleaving $e$ of $r$ and $r'$, $\hat{r} \ne e$.
    \item \temph{grassroots} if it is oblivious and interactive.
\end{enumerate}
\end{definition}

\begin{definition}[Transactions Over a Local-States Function]\label{def:tblsf}
Let $S$ be a local-states function.  A set of transactions $R$ is \temph{over $S$} if every transaction $t\in R$ is a multiagent transition over $Q$ and $S(P')$ for some $Q \subseteq P'\subset \Pi$.  Given such a set $R$ and $P\subset \Pi$, $R(P) := \{ t\in R : t \text{ is over } Q \text{ and } S(P'), Q \subseteq P'\subseteq P\}$.
\end{definition}

\begin{definition}[Transactions-Based Protocol]\label{def:tb-protocol}
Let $S$ be a local-states function and $R$ a set of guarded transactions over $S$ with equivalence $\sim$.  The \temph{protocol $\calF$ over $R$, $S$, and $\sim$} assigns to each set of agents $P\subset \Pi$ the volitional multiagent transition system $\calF(P)$ induced by $(S(P),R(P),{\sim})$ over $P$ (Definition~\ref{def:vmts}).
\end{definition}

\begin{corollary}[Guarded Obliviousness, \cite{lewis2026volitional}]\label{cor:guarded-oblivious}
A transactions-based protocol over a set of guarded transactions $R$ is oblivious if, for every disjoint nonempty $P, P'\subset \Pi$, every machine transaction in $R(P\cup P')$ with participants spanning both $P$ and $P'$ has a nonempty guard in $R(P\cup P')$.
\end{corollary}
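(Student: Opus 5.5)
To prove Corollary~\ref{cor:guarded-oblivious}, I will fix disjoint nonempty $P,P'\subset\Pi$, a correct run $r$ of $\calF(P)$, a correct run $r'$ of $\calF(P')$, and an interleaving $e$ of them. The goal is to show that $e$ is a correct run of $\calF(P\cup P')$. The argument has two parts: first that $e$ is a run at all, and then that it is correct in the sense of Definition~\ref{def:correct}.

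\emph{$e$ is a run.} Take any $P$-step $e_k\to e_{k+1}$; the $P'$-step case is symmetric. On $P$ this step coincides with a transition $c_{i_k}\to c_{i_k+1}$ of $\calF(P)$, and every agent of $P'$ is stationary. If that transition is a change-volition of some $p\in P$, then it is also a change-volition transition of $\calF(P\cup P')$. If it is a volitional machine transaction induced by some $(t,Q')\in R(P)$, I first note that $R(P)\subseteq R(P\cup P')$, by Definition~\ref{def:tblsf} and monotonicity of $S$. The participants $Q$ of $t$ lie in $P$, so $P'$ agents are outside $Q$ and unchanged, exactly as the closure requires.

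The one subtle point is the clause $c'{^v_p}=c^v_p\setminus\{[t]\}$ for every $p$, which now has to hold for $p\in P'$ as well. I will argue that no agent of $P'$ can hold $[t]$ in its volitional state. Volitions are sets of classes of $T_R/{\sim}$, and equivalent transactions share participants. Along $r'$, agents of $P'$ only ever add classes drawn from $R(P')$, whose participants lie in $P'$, disjoint from $Q$. Hence $[t]\notin e^v_{q}$ for $q\in P'$, the removal is vacuous, and the step is a legitimate transition of $\calF(P\cup P')$. Since $e_0$ is the initial configuration, $e$ is a run.

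\emph{$e$ is correct.} Suppose, for contradiction, that some class $[t]$ of $R(P\cup P')$ is enabled throughout a suffix of $e$ and never taken in it. I split on the participants $Q$ of an enabled member $(t'',Q'')$.
\begin{itemize}
\item If $Q\subseteq P$, then $e$ and $r$ agree on $P$. If the suffix contains infinitely many $P$-steps, the corresponding suffix of $r$ enables $[t]$ without taking it, contradicting correctness of $r$. If it contains only finitely many, $r$ is finite and its last configuration still enables $[t]$; correctness of a finite run forbids this, since a one-configuration suffix would witness the violation.
\item The case $Q\subseteq P'$ is symmetric.
\item If $Q$ spans both $P$ and $P'$, the hypothesis gives a nonempty guard $Q''$. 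Enabledness requires $[t]\in e^v_q$ for every $q\in Q''$. But by the invariant above, an agent of $P$ only holds classes from $R(P)$, and an agent of $P'$ only classes from $R(P')$, whereas $[t]$ has participants in both. So $[t]$ is in no agent's volitional state and cannot be enabled, a contradiction.
\end{itemize}

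The main obstacle is making the invariant precise: that along $r$, volitional states only contain classes of $T_{R(P)}/{\sim}$. This relies on reading the volitional state space of $\calF(P)$ as $2^{T_{R(P)}/\sim}$ and on the classes being participant-homogeneous (Definition~\ref{def:equivalence}). I also need to handle the case where $r$ or $r'$ is finite when relating suffixes. If the state spaces are instead meant to be the full class set, I would restrict attention to the classes actually enabled.
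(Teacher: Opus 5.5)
There is a genuine gap in your case $Q\subseteq P$. The paper imports this corollary without proof, so I judge your argument on its own terms.

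You pass from ``$[t]$ is enabled at $e_k$'' to ``$[t]$ is enabled at $c_{i_k}$ in $\calF(P)$''. That requires the enabled member $(t'',Q'')$, known only to lie in $R(P\cup P')$, to lie in $R(P)$. Having its participants inside $P$ does not ensure this. By Definition~\ref{def:tblsf}, membership in $R(P)$ also requires $t''$ to be over $S(P'')$ for some $P''\subseteq P$. A transaction among $P$-agents may start in $S(P)$ yet end in a state of $S(P\cup P')\setminus S(P)$ that refers to $P'$-agents. Such a transaction does not span $P$ and $P'$, so the hypothesis says nothing about its guard, and $\calF(P)$ has no counterpart to it.

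Concretely, let $S(P)=\{s_0\}\cup\{h_q : q\in P\}$, and let $R$ consist of the unguarded transactions $t_{p,q}$, $p\neq q$, whose sole participant $p$ moves from $s_0$ to $h_q$. Then $t_{p,q}\in R(P)$ iff $p,q\in P$. So $\calF(\{p\})$ and $\calF(\{q\})$ have no transitions, and their one-configuration runs are correct. Yet their interleaving, the initial configuration of $\calF(\{p,q\})$, enables $t_{p,q}$ and is therefore not correct. Meanwhile the hypothesis holds vacuously, since no transaction spans.

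So this step cannot be derived from the stated hypothesis. You need an extra closure assumption, e.g.\ that every transaction of $R(P\cup P')$ whose participants lie in $P$ and whose source states lie in $S(P)$ already belongs to $R(P)$ (and symmetrically for $P'$). Your case $Q\subseteq P$ must invoke it. The bonds protocol satisfies this assumption: Mint, Advance-date, Swap, Pay and Redeem among agents holding only $\calB(P)$-bonds produce only $\calB(P)$-bonds. Hence Corollary~\ref{cor:bonds-grassroots} is unaffected.

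Two further assumptions must be made explicit.

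First, ``finitely many $P$-steps, hence $r$ is finite'' does not follow from Definition~\ref{def:interleaving}. That definition never requires $(i_k)$ and $(j_k)$ to exhaust $r$ and $r'$. For example, the one-configuration sequence $e_0$ is an interleaving of any two runs, and in the bonds protocol it is not correct, because Advance-date is unguarded and always enabled. You must restrict to complete interleavings; otherwise the claim fails even for the bonds protocol.

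Second, reading the volitional states of $\calF(P)$ as subsets of $T_{R(P)}/{\sim}$ is not optional, and your fallback does not work. If agents may hold arbitrary classes, a $P$-agent and a $P'$-agent can both will a swap between them of bonds they hold. That swap is then enabled along $e$ but never enabled in $\calF(P)$ or $\calF(P')$. Even under the restricted reading, a class of $T_{R(P)}/{\sim}$ may be a proper subset of the corresponding class of $T_{R(P\cup P')}/{\sim}$; for instance, a Mint class gains members in which the minter holds $P'$-bonds. So classes must be identified by name when you lift change-volition steps and discharges.

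With these repairs, your proof that $e$ is a run and your spanning case are sound.
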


\begin{theorem}[\cite{lewis2026volitional}]\label{thm:interactive-grassroots}
A transactions-based protocol that satisfies the condition of Corollary~\ref{cor:guarded-oblivious} and is interactive is grassroots.
\end{theorem}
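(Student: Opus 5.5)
The theorem is essentially a combination of definitions. By Definition~\ref{def:grassroots}(3), a protocol is grassroots exactly when it is oblivious and interactive. Interactivity is assumed outright, so the only work is to establish obliviousness, and Corollary~\ref{cor:guarded-oblivious} supplies it directly.

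First I fix the protocol $\calF$ over $R$, $S$, and $\sim$ (Definition~\ref{def:tb-protocol}). I then check that the hypothesis of the theorem literally matches the hypothesis of Corollary~\ref{cor:guarded-oblivious}. That hypothesis is: for every pair of disjoint nonempty $P,P'\subset\Pi$, every machine transaction in $R(P\cup P')$ whose participants meet both $P$ and $P'$ carries a nonempty guard. Invoking the corollary then gives that $\calF$ is oblivious in the sense of Definition~\ref{def:grassroots}(1). Combining this with the assumed interactivity yields grassroots.

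If one wanted the argument not to rest entirely on the cited corollary, the substantive step would be reproving obliviousness. I would take an interleaving $e$ of correct runs $r$ of $\calF(P)$ and $r'$ of $\calF(P')$, and show two things:
\begin{enumerate}
\item \emph{Each step of $e$ is a transition of $\calF(P\cup P')$.} This follows from the closure in Definition~\ref{def:closure}, since agents outside the acting side are stationary, and from $R(P)\subseteq R(P\cup P')$ (Definition~\ref{def:tblsf}).
\item \emph{$e$ is correct.} This is the main obstacle.
\end{enumerate}

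For correctness, suppose a class $[t]$ is enabled in some suffix of $e$ but never taken there. There are two cases.
\begin{itemize}
\item If the participants of $t$ lie within $P$ (or within $P'$), then the same class is enabled in the corresponding suffix of $r$ (or $r'$), contradicting that run's correctness. This needs care when one side takes only finitely many steps, since the suffix then collapses to a final configuration.
\item If the participants span both sides, the guard condition applies. No agent of $P$ ever wills a class of transactions outside $R(P)$, because change-volition in $\calF(P)$ ranges only over $T_{R(P)}/{\sim}$. Hence the nonempty guard is never satisfied, $[t]$ is never enabled, and no correctness obligation arises.
\end{itemize}

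Since the corollary is already established in~\cite{lewis2026volitional}, I would simply cite it, so the proof reduces to a two-line deduction.
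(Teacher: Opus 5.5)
Your proof is correct and is the intended derivation: the guard condition gives obliviousness via Corollary~\ref{cor:guarded-oblivious}, interactivity is assumed, and Definition~\ref{def:grassroots}(3) combines the two. Your optional re-derivation of obliviousness is not needed, but its first case implicitly requires the interleaving to exhaust both runs, which Definition~\ref{def:interleaving} does not state; an interleaving that stops advancing $r$ at a configuration where some class internal to $P$ is enabled (e.g.\ an unguarded Advance-date) is not correct, so the difficulty is more than the collapse to a final configuration that you mention.
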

\section{Grassroots Bonds}\label{sec:bonds}

We assume a potentially infinite set of \emph{agents} $\Pi$, but consider only finite subsets of it, so when we refer to a particular set of agents $P \subset \Pi$ we assume $P$ to be nonempty and finite.

\begin{definition}[Grassroots Bonds]\label{def:bonds}
A \temph{$p$-bond with maturity date $d$}, denoted \textcent$_{p,d}$, is a unit of debt issued by $p \in \Pi$ maturing at date $d \in \calN$.  We let $\calB(P) = \{\text{\textcent}_{p,d} : p \in P, d \in \calN\}$ denote the set of all grassroots bonds by agents $P \subset \Pi$.  The machine state of agent $p \in P$ (Definition~\ref{def:mt}) is a pair $(c_p, d_p^*)$ where $c_p$ is a multiset of members of $\calB(P)$ (initially $\emptyset$) and $d_p^* \in \calN$ is the local current date (initially $0$); $p$ considers a bond \textcent$_{q,d}$ to be \temph{mature}, and refers to it as a \temph{$q$-coin} (denoted \textcent$_q$), iff $d \le d_p^*$.  The agent state of $p$ (Definition~\ref{def:agent-state}) is then $(V_p, (c_p, d_p^*))$, where $V_p$ is the volitional state of $p$.  There is no global date; agents may disagree on which bonds are mature.  We write \textcent$^k_{p,d}$ for a multiset of $k$ bonds \textcent$_{p,d}$.  In particular, a bond minted with $d = 0$ is a grassroots coin~(cf.~\cite{shapiro2024gc}) from the outset.
\end{definition}
In a practical implementation, each $p$-bond will have a serial number and value and be signed by $p$.  For simplicity of exposition the bonds (and coins) employed here can form multisets and are implicitly with denomination $1$.

As discussed in~\cite{shapiro2024gc}, coins are backed by goods and services offered by the coin's issuer and priced in their coins, with bond values derived from expectation on the coin value.
Coin redemption ensures that coins issued by mutually-liquid persons are of equal value, namely traded at a 1-for-1 rate.  A consequence of the redemption rule is that all coins held by an issuer of a coin are necessarily part of the goods offered by the issuer, priced 1-for-1.  Thus, the holder of a $q$-coin that loses trust in $q$ may redeem it against any bond held by $q$.  A holder of an immature $q$-bond who loses trust in $q$ may sell the debt in the free market, as described in Section~\ref{sec:instruments}.

\subsection{A Digital Social Contract for Grassroots Bonds}\label{sec:bonds-volitional}\label{sec:dsc}

A \temph{digital social contract}~\cite{cardelli2020digital} is a voluntary agreement among persons, specified, fulfilled, and enforced digitally. It is the grassroots counterpart of smart contracts~\cite{shapiro2023grassrootsBA}, executed not by third-parties but by the parties to the contract, and may be consensus-based~\cite{keidar2025constitutional} or consensus-free.  Here, we specify the digital social contract that governs grassroots bonds using volitional multiagent atomic transactions~\cite{lewis2026volitional} (Section~\ref{sec:foundations}), and demonstrate their implementation in Section~\ref{sec:implementation}.

\begin{definition}[Grassroots Bonds Volitional Transactions]\label{def:gc-volitional}
The grassroots bonds volitional transactions are:
\begin{enumerate}
    \item \textbf{Mint}: $c'_p := c_p \cup \text{\textcent}^k_{p,d}$, $k > 0$, $d \in \calN$; $d_p^*$ unchanged.  Guarded by $p$.
    \item \textbf{Advance-date}: ${d_p^*}' > d_p^*$; $c_p$ unchanged.  Unguarded.
    \item \textbf{Voluntary swap}: $c'_p := (c_p \cup y) \setminus x$, $c'_q := (c_q \cup x) \setminus y$, provided $x \subseteq c_p$, $y \subseteq c_q$; $d_p^*, d_q^*$ unchanged.  Guarded by $\{p, q\}$.
    \item \textbf{Pay}: $c'_p := c_p \setminus x$, $c'_q := c_q \cup x$, where $x \subseteq c_p$ is a set of $q$-coins (bonds \textcent$_{q,d}$ with $d \le d_p^*$); $d_p^*, d_q^*$ unchanged.  Guarded by $p$.
    \item \textbf{Redeem}: $c'_p := (c_p \cup y) \setminus x$, $c'_q := (c_q \cup x) \setminus y$, where $x = \{\text{\textcent}_{q,d'}\} \subseteq c_p$ with $d' \le d_p^*$, $y = \{\text{\textcent}_{r,d}\} \subseteq c_q$, $r \in P$, $d \in \calN$; $d_p^*, d_q^*$ unchanged.  Guarded by $p$.
\end{enumerate}
\end{definition}

All Mint transactions by the same agent $p$ with the same $k$ and $d$ (differing only in the configurations in which they occur) form an equivalence class. All Advance-date transactions of the same agent form an equivalence class. All Voluntary swap transactions between the same pair $\{p, q\}$ exchanging the same multisets $x$ and $y$ form an equivalence class. All Pay transactions from $p$ to $q$ transferring the same multiset $x$ form an equivalence class. All Redeem transactions in which $p$ surrenders $x = \{\text{\textcent}_{q,d'}\}$ for $y = \{\text{\textcent}_{r,d}\}$ form an equivalence class; $p$ may choose to redeem against any bond $q$ holds.

Minting, paying, and redeeming are guarded by the initiator; voluntary swap requires both parties to be willing; Advance-date is unguarded, since local time advances mechanically.

Only coins---that is, mature bonds---may be paid or redeemed; coin-for-bond redemption generalises the coin-for-coin redemption of grassroots currencies~\cite{shapiro2024gc} by allowing the holder of a coin to choose \emph{any bond} held by the issuer of the coin. The holder of an immature bond may sell the debt to a third party, as shown in Section~\ref{sec:instruments}.

\begin{lemma}[Conservation of Money,~\cite{lewis2026volitional}]\label{lem:conservation}
In any run $r$ of the grassroots bonds protocol, the $p$-bonds in any configuration $c \in r$ are exactly the $p$-bonds minted by $p$ in the prefix of the run ending in $c$.
\end{lemma}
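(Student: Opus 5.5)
The proof is by induction on the length of the prefix of $r$ ending in $c$. We track the multiset of all $p$-bonds held across all agents, $B_p(c) := \biguplus_{q\in P} \{b \in c^m_q : b = \text{\textcent}_{p,d} \text{ for some } d\}$, where $c^m_q$ denotes the bond multiset component $c_q$ of $q$'s machine state. Let $M_p(r,c)$ be the multiset of $p$-bonds produced by Mint transactions of $p$ in the prefix ending in $c$. The claim is $B_p(c) = M_p(r,c)$, equality of multisets including maturity dates.

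Base case: in the initial configuration $c_0$ every agent's bond multiset is $\emptyset$ (Definition~\ref{def:bonds}) and no Mint has occurred, so both sides are empty.

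Inductive step: consider a transition $e\to e'$ of $r$. By Definition~\ref{def:vmts} it is either a change-volition or a volitional machine transaction induced by some guarded transaction in Definition~\ref{def:gc-volitional}. A change-volition leaves all machine states unchanged (Definition~\ref{def:vmat}(1)), so $B_p$ and $M_p$ are unchanged. For a volitional machine transaction, Definition~\ref{def:vmat}(2) keeps the machine states of non-participants fixed, so it suffices to examine the participants in each of the five cases.

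\textbf{Mint} by $p$ adds \textcent$^k_{p,d}$ to $c_p$ and adds the same multiset to $M_p$. Mint by $q\ne p$ adds only $q$-bonds, so neither $B_p$ nor $M_p$ changes.

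\textbf{Advance-date} changes only $d^*_q$.

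\textbf{Voluntary swap}, \textbf{Pay} and \textbf{Redeem} each have the form $c'_q = (c_q\cup y)\setminus x$ and $c'_{q'} = (c_{q'}\cup x)\setminus y$, with $x\subseteq c_q$ and $y\subseteq c_{q'}$; Pay is the special case $y=\emptyset$. The side conditions $x\subseteq c_q$ and $y\subseteq c_{q'}$ make multiset difference exact, so the sum $c_q\uplus c_{q'}$ is preserved. Restricting to $p$-bonds, $B_p$ is preserved, and $M_p$ is unchanged since no Mint occurred.

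This also requires $q\ne q'$, which holds because participants are distinct agents; a self-swap with $q=q'$ would have $c=c'$ and is therefore excluded by Definition~\ref{def:mt}.

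The main obstacle is bookkeeping rather than mathematics. One must read $\cup$ and $\setminus$ in Definition~\ref{def:gc-volitional} as multiset sum and multiset difference, and check that the subset side conditions make the difference exact. One must also check that the Redeem conditions on maturity (the condition $d'\le d^*_p$ on the surrendered bond \textcent$_{q,d'}$) only restrict when the transaction is enabled and do not affect its effect on the bond multisets. Finally, the date component must be kept distinct from bond identity, so that Advance-date, which changes which bonds an agent considers mature, alters neither $B_p$ nor $M_p$.
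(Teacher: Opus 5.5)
Your proof is correct. The paper states this lemma without proof and defers to \cite{lewis2026volitional}, and your induction over the run is the standard invariant argument behind that citation: Mint by $p$ is the only transaction that creates $p$-bonds, while Voluntary swap, Pay and Redeem only move bonds between two distinct participants with exact multiset differences. Reading $\cup$ and $\setminus$ as multiset sum and exact difference, and ruling out degenerate self-swaps, is the right bookkeeping, since the lemma depends on that reading.
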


\begin{proposition}[Chain Redemption]\label{prop:chain-redemption}
Let $p_0, p_1, \ldots, p_k$ be a sequence of agents in $P$ with $k \ge 1$ such that each $p_i$ ($0 \le i < k$) holds a $p_{i+1}$-coin.  Then there exists a sequence of $k-1$ Redeem transactions whose effect is that $p_0$ holds a $p_k$-coin.
\end{proposition}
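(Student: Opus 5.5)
We prove the statement by induction from the end of the chain. Redemptions run backwards: first $p_{k-1}$ redeems, then $p_{k-2}$, and so on down to $p_1$. That is $k-1$ Redeem transactions, and after them $p_0$ holds a $p_k$-coin.

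The invariant we maintain is the following. After the redemptions by $p_{k-1},\ldots,p_{j}$ have been taken, for $1\le j\le k-1$, agent $p_{j-1}$ still holds its original $p_j$-coin and $p_j$ holds a $p_k$-coin. For $j=k$, with no redemptions yet, this is just the hypothesis that $p_{k-1}$ holds a $p_k$-coin.

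For the inductive step, fix $j$ with $1\le j\le k-1$ and suppose $p_j$ holds a $p_k$-coin \textcent$_{p_k,d}$. Agent $p_{j-1}$ holds a $p_j$-coin \textcent$_{p_j,d'}$ with $d'\le d^*_{p_{j-1}}$, so Redeem (Definition~\ref{def:gc-volitional}, item 5) applies with $p:=p_{j-1}$, $q:=p_j$, $x=\{\text{\textcent}_{p_j,d'}\}$ and $y=\{\text{\textcent}_{p_k,d}\}\subseteq c_{p_j}$. Redemption may be taken against any bond held by $q$, mature or not, so $y$ is admissible. The transaction is guarded only by $p_{j-1}$, so its volition suffices and no cooperation from $p_j$ is needed. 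After it, $p_{j-1}$ holds \textcent$_{p_k,d}$, which re-establishes the invariant at index $j-1$.

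This transaction touches only the states of $p_{j-1}$ and $p_j$, and it changes neither agent's local date. Consequently, for the next step ($j-2$ redeeming with $j-1$), $p_{j-2}$ still holds its $p_{j-1}$-coin and still regards it as mature. When $j=1$, $p_0$ holds \textcent$_{p_k,d}$. Exactly one Redeem was used for each $j=k-1,\ldots,1$, giving $k-1$ in total. For $k=1$ no transaction is needed at all.

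The main obstacle is the word ``coin'' in the conclusion. Maturity is relative to the holder's local date, so the bond $p_0$ ends up with, \textcent$_{p_k,d}$, was mature for $p_{k-1}$ (since $d\le d^*_{p_{k-1}}$) but need not be mature for $p_0$. Redeem itself changes no dates. I would handle this in one of two ways. The first is to read the statement with respect to the original holder's maturity judgment, or with dates assumed synchronised; then the transferred bond is a $p_k$-coin under a common date. The second is to note that the unguarded Advance-date transaction can bring $d^*_{p_0}$ up to $d$; this changes only $c$'s date component and not holdings, so it does not count among the Redeem steps. I would state the invariant with the precise maturity condition, namely that $p_0$ holds \textcent$_{p_k,d}$ with $d\le d^*_{p_{k-1}}$, and add a remark resolving it by either option. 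A secondary check is that the coins involved are distinct bond instances, so the multiset updates are well-defined; this is immediate because each step moves one bond between two distinct agents' multisets.
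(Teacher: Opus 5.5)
Your proof is correct, but it runs the chain in the opposite direction from the paper, and the two routes have different properties.

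\textbf{The two constructions.} The paper inducts forward: $p_0$ redeems its $p_1$-coin against $p_1$'s $p_2$-coin, then applies the induction hypothesis to the shorter chain $p_0,p_2,\ldots,p_k$. So $p_0$ is the redeemer in all $k-1$ Redeems. You collapse the chain from the far end: $p_{j-1}$ redeems its \emph{original} $p_j$-coin against the $p_k$-bond that $p_j$ has just received. Your redeemers are therefore $p_{k-2},\ldots,p_0$. Both constructions reach the same final holdings: each $p_i$ trades its $p_{i+1}$-coin for one of its own coins, and $p_0$ ends with $p_{k-1}$'s $p_k$-bond.

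\textbf{An indexing slip.} Your opening sentence and invariant are off by one. It is not $p_{k-1}$ who redeems first, and the invariant starts at $j=k-1$, not $j=k$. Your inductive step has the indices right.

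\textbf{What the paper's route buys.} Every Redeem is guarded by $p_0$ alone. So $p_0$ can execute the chain unilaterally, which is what makes chain redemption a tool for long-range payment by $p_0$. Your route instead needs each of $p_1,\ldots,p_{k-2}$ to be willing to redeem.

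\textbf{What your route buys.} It is robust to local dates. Each Redeem spends a coin that its redeemer regards as mature by hypothesis, and Redeem changes no dates, so every step is enabled. In the paper's forward induction, $p_0$ must redeem bonds it has just received from $p_1,p_2,\ldots$. Those bonds were mature by their previous holders' dates, but need not satisfy $d\le d^*_{p_0}$. The paper's step ``after this Redeem, $p_0$ holds a $p_2$-coin'' tacitly assumes a common date.

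\textbf{The final maturity issue.} The question you raise, whether the last bond is a coin by $p_0$'s date, affects both proofs equally. Your resolution is reasonable: either read maturity under a common date, or add an Advance-date step for $p_0$ that is not counted among the Redeems.
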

\begin{proof}
By induction on $k$.  For $k=1$, $p_0$ already holds a $p_1$-coin: zero redemptions suffice.  For $k>1$, by hypothesis $p_0$ holds a $p_1$-coin and $p_1$ holds a $p_2$-coin.  $p_0$ applies Redeem (Definition~\ref{def:gc-volitional} item~5) with $x$ a $p_1$-coin held by $p_0$ and $y$ a $p_2$-coin held by $p_1$.  After this Redeem, $p_0$ holds a $p_2$-coin, and the chain $p_0, p_2, \ldots, p_k$ has $k-1$ edges with each $p_i$ for $i\ge 2$ still holding a $p_{i+1}$-coin.  By induction $k-2$ further Redeems put a $p_k$-coin in the holdings of $p_0$.
\end{proof}

Chain redemption enables long-range payments and arbitrage, and entails a 1:1 exchange rate among coins within liquidity-connected components.  Note that concurrent redemptions may interfere with each other.

\begin{corollary}[Grassroots,~\cite{lewis2026volitional}]\label{cor:bonds-grassroots}
The grassroots bonds protocol is grassroots.
\end{corollary}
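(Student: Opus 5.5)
The plan is to apply Theorem~\ref{thm:interactive-grassroots}, which needs two things. First, the grassroots bonds protocol must be a transactions-based protocol (Definition~\ref{def:tb-protocol}) satisfying the condition of Corollary~\ref{cor:guarded-oblivious}. Second, it must be interactive (Definition~\ref{def:grassroots}).

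For the setup, I take the local-states function to be $S(P) := \{(c,d^*) : c \text{ a finite multiset over } \calB(P),\ d^*\in\calN\}$ with $s_0=(\emptyset,0)$. This function is monotone in $P$, since $\calB(P)\subset\calB(P')$ whenever $P\subset P'$. The set $R$ consists of the guarded transactions of Definition~\ref{def:gc-volitional}, with the stated equivalence. Each Mint or Advance-date transaction of $p$ has participants $\{p\}$ and states over $S(P')$ for some $P'$ containing the issuers mentioned. So $R$ is over $S$ as required by Definition~\ref{def:tblsf}.

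For the obliviousness condition, I go through the five transaction types.
\begin{itemize}
\item Mint and Advance-date have a single participant, so they never span both $P$ and $P'$. The fact that Advance-date is unguarded is therefore harmless.
\item Voluntary swap is guarded by $\{p,q\}$.
\item Pay and Redeem are guarded by their initiator $p$.
\end{itemize}
Every transaction whose participants cross between $P$ and $P'$ thus has a nonempty guard. One subtlety is that Redeem mentions a third issuer $r$; however, $r$ is not a participant, only the states of $p$ and $q$ change. The guard also persists in $R(P\cup P')$, because restricting to $R(P\cup P')$ keeps the guarded transaction intact. Corollary~\ref{cor:guarded-oblivious} then gives obliviousness.

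The main step is interactivity. Fix disjoint $P,P'$ and pick $p\in P$, $q\in P'$. I build $\hat r$ in $\calF(P\cup P')$ in four steps:
\begin{enumerate}
\item $p$ changes volition to include a Mint of one $p$-bond $\text{\textcent}_{p,0}$.
\item $p$ mints that bond.
\item $p$ changes volition to include Pay of that coin to $q$.
\item $p$ pays the coin to $q$.
\end{enumerate}
Here $\text{\textcent}_{p,0}$ is a $p$-coin, which is mature by $p$'s date $0$. After the last step, $q\in P'$ holds a $p$-bond with $p\notin P'$, and $\hat r$ stops there.

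Two things must be checked. First, $\hat r$ is correct (Definition~\ref{def:correct}). It is finite and every volition is discharged, so I will argue that finite runs ending with empty volitional states have no perpetually enabled class. Second, $\hat r$ is no interleaving of runs of $\calF(P)$ and $\calF(P')$. In any run of $\calF(P')$ the machine states lie in $S(P')$, so $q$ can never hold a $p$-bond; equivalently, by Lemma~\ref{lem:conservation}, $p$-bonds arise only by $p$'s minting, which is not a $P'$-transaction.

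The obstacle I expect is the correctness notion for finite runs. If Definition~\ref{def:correct} treats the last configuration as a suffix in which unguarded Advance-date classes are enabled, then a finite $\hat r$ may fail to be correct. I would first try extending $\hat r$ with an infinite tail of Advance-date steps for every agent, interleaved fairly so each agent's date increases infinitely often. In that tail, only the Advance-date classes remain enabled, and each is taken infinitely often. The exact phrasing of correctness in the cited framework determines which version is needed. Once $\hat r$ is correct, Theorem~\ref{thm:interactive-grassroots} yields the corollary.
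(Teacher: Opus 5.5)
Your route (Theorem~\ref{thm:interactive-grassroots}, via the guard check of Corollary~\ref{cor:guarded-oblivious} plus an explicit interactivity witness) is the intended one, and your obliviousness argument is fine. The paper itself gives no proof and defers to the cited framework.

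\textbf{Step~4 of your witness $\hat r$ is not a transaction of the protocol.} In Definition~\ref{def:gc-volitional}, Pay from $p$ to $q$ transfers a set $x$ of \emph{$q$-coins}, i.e.\ bonds $\text{\textcent}_{q,d}$ issued by the payee (Bob pays Alice in alice-coins). You checked the maturity condition $0\le d_p^*$ but not the issuer. Since $p\ne q$, $\text{\textcent}_{p,0}$ is not a $q$-coin.

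This cannot be repaired within Pay or Redeem. Any Pay or Redeem between $p$ and $q$, in either direction, requires one of them to already hold a coin issued by the other. Starting from $c_0$, such a coin can only arrive through an earlier cross-boundary transaction. So the first step linking $P$ and $P'$ must be a Voluntary swap.

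The fix is to replace steps~3--4 as follows:
\begin{itemize}
\item $p$ and $q$ each change volition to include the class of the swap between them with $x=\{\text{\textcent}_{p,0}\}$ and $y=\emptyset$. Alternatively, both mint and swap 1-for-1, as in symmetric mutual credit.
\item The swap, guarded by $\{p,q\}$, is then taken.
\end{itemize}
Afterwards $q$ holds a $p$-bond. Your separation argument then goes through unchanged: $q$'s machine states in any run of $\calF(P')$ are multisets over $\calB(P')$, and $\text{\textcent}_{p,0}\notin\calB(P')$.

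Your hedge about finite runs is settled by Definition~\ref{def:correct}. Advance-date is unguarded and applicable in every machine state, so each agent's Advance-date class is enabled in every configuration. The one-configuration suffix of any finite run therefore witnesses incorrectness, so no finite run is correct. The fair infinite Advance-date tail is required, not an optional fallback.

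With the tail, the argument works. A taken class is removed from every agent's volitional state, so after the swap all volitions are empty. The only enabled classes in the tail are then the Advance-date classes, each taken infinitely often, so the extended $\hat r$ is correct and still differs from every interleaving.
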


\mypara{Insolvency}  Insolvency is manifest in the inability of a person to satisfy redemption claims on coins they have issued.  Holders of an insolvent person's bonds may treat them as bad debt and sell them at a discount, effectively devaluating the insolvent person's currency~\cite{shapiro2024gc}.  The instruments developed in this paper---collateral, escrow, and community-level risk management (Sections~\ref{sec:escrow}--\ref{sec:community-bank})---provide measures to mitigate the consequences of insolvency.
\section{Financial Instruments}\label{sec:instruments}

The full gamut of financial instruments can be expressed via the voluntary swap of grassroots bonds.  Each instrument below is a voluntary swap (Definition~\ref{def:gc-volitional}) between agents $p$ and $q$, guarded by the two agents, specified by the bonds $x$ transferred from $p$ to $q$ and the bonds $y$ transferred from $q$ to $p$. 
\begin{enumerate}

\item \textbf{Symmetric mutual credit.}  $p$ and $q$ each mint $k$ coins and exchange them:  $x = \text{\textcent}^k_p$, $y = \text{\textcent}^k_q$.

\item \textbf{Zero-coupon loan.}  Lender $p$ mints $k' < k$ coins, borrower $q$ mints a bond maturing at $d$:  $x = \text{\textcent}^{k'}_p$, $y = \text{\textcent}^k_{q,d}$.

\item \textbf{Balloon loan.}  Lender $p$ mints $k$ coins, borrower $q$ mints interest bonds $\text{\textcent}^{k_j}_{q,d_j}$ for $j \in [1..n]$ and a principal bond $\text{\textcent}^k_{q,d}$:  $x = \text{\textcent}^k_p$, $y = \bigcup_{j=1}^{n} \text{\textcent}^{k_j}_{q,d_j} \cup \text{\textcent}^k_{q,d}$.

\item \textbf{Fixed-payment loan.}  Lender $p$ mints $k$ coins, borrower $q$ mints bonds with fixed payments $k_j$ at dates $d_j$:  $x = \text{\textcent}^k_p$, $y = \bigcup_{j=1}^{n} \text{\textcent}^{k_j}_{q,d_j}$.

\item \textbf{Sale of debt.}  $p$ sells $r$-bonds to $q$ for $k' < k$ $q$-coins:  $x = \text{\textcent}^k_r$, $y = \text{\textcent}^{k'}_q$.  The debtor $r$ is not a party.

\item \textbf{Forward contract.}  $p$ and $q$ exchange bonds both maturing at $d$:  $x = \text{\textcent}^k_{p,d}$, $y = \text{\textcent}^{k'}_{q,d}$.  Note that matched-maturity bonds are swapped at inception; economically it is the same as swapping them upon maturity.

\item \textbf{Interest rate swap.}  $p$ and $q$ exchange two sets of bonds with differing maturity schedules:  $x = \bigcup_j \text{\textcent}^{k_j}_{p,d_j}$, $y = \bigcup_j \text{\textcent}^{k'_j}_{q,d_j}$, where $p$ pays fixed amounts $k_j$ while $q$ pays amounts $k'_j$ that vary according to a reference rate.  We assume an interest-rate oracle, as standard.  Periodic settlement swaps realise the net difference at each date.
\end{enumerate}

The instruments cover mutual credit and loans (items 1--4), debt trading (item 5), derivatives (items 6--7).  Currency swaps and repurchase agreements (repos) are realised as sequences of two swaps; factoring (selling receivables at a discount) is an instance of sale of debt; a mortgage is a loan combined with collateral.  Collateral, guarantees, options, insurance, credit default swaps, letters of credit, and credit lines are developed next as escrow-based instruments.
\section{Escrow}\label{sec:escrow}

An escrow agent $e$ holds bonds on behalf of others and releases them according to agreed-upon conditions.  No new transaction type is needed: deposit, release, and return are all swaps, guarded by the parties to the swap.  The escrow agent $e$ is bound by a digital social contract.  Section~\ref{sec:implementation} shows how it can be realised securely in GLP.  For instruments that depend on external information (insurance events, credit events, document verification), we assume that standard oracles are available.

\mypara{Mechanism}  An escrow arrangement involves a depositor, a beneficiary, and an escrow agent $e$.  The depositor transfers bonds to $e$ via a swap.  Subsequently, $e$ either releases the bonds to the beneficiary or returns them to the depositor, each via a swap.  The conditions under which $e$ releases or returns are part of the agreement.

\begin{enumerate}
\item \textbf{Collateral:}  To collateralise a loan from lender $p$ to borrower $q$:

\begin{enumerate}
\item \textbf{Deposit.}  $\{(q, x),\; (e, \emptyset)\}$: borrower $q$ transfers bonds $x$ to escrow agent $e$. 
\item \textbf{Release on default.}  $\{(e, x),\; (p, \emptyset)\}$: $e$ transfers collateral to lender $p$. 
\item \textbf{Return on fulfilment.}  $\{(e, x),\; (q, \emptyset)\}$: $e$ returns collateral to borrower $q$. 
\end{enumerate}

\item \textbf{Guarantee:}  Guarantor $g$ covers the obligations of borrower $q$ to lender $p$:

\begin{enumerate}
\item \textbf{Deposit.}  $\{(g, \text{\textcent}^k_g),\; (e, \emptyset)\}$: $g$ deposits $g$-bonds with escrow agent $e$.  
\item \textbf{Invocation on default.}  $\{(e, \text{\textcent}^k_g),\; (p, \emptyset)\}$: $e$ releases $g$-bonds to lender $p$. 
\item \textbf{Release on fulfilment.}  $\{(e, \text{\textcent}^k_g),\; (g, \emptyset)\}$: $e$ returns $g$-bonds to $g$. 
\end{enumerate}

\item \textbf{Option:}  An option grants $p$ the right, but not the obligation, to execute a specified swap with $q$ at or before expiry date $d$, judged by escrow agent local date $d_e^*$, with establishment-window $T_0$ before which both parties must complete their deposits.  The option premium and the underlying assets are held in escrow:

\begin{enumerate}
\item \textbf{Establishment.}  $p$ deposits the premium \textcent$^k_p$ with escrow agent $e$: $\{(p, \text{\textcent}^k_p),\; (e, \emptyset)\}$.  $q$ deposits the underlying bonds $y$ with $e$: $\{(q, y),\; (e, \emptyset)\}$.  If only one party deposits by establishment-window $T_0$, $e$ returns the deposit to its depositor.
\item \textbf{Exercise.}  While $d_e^* \le d$, $p$ may signal exercise; $e$ releases the underlying $y$ to $p$: $\{(e, y),\; (p, \emptyset)\}$, and releases the premium to $q$: $\{(e, \text{\textcent}^k_p),\; (q, \emptyset)\}$.
\item \textbf{Expiry.}  Once $d_e^* > d$, if $p$ has not exercised, $e$ returns the underlying to $q$: $\{(e, y),\; (q, \emptyset)\}$, and transfers the premium to $q$: $\{(e, \text{\textcent}^k_p),\; (q, \emptyset)\}$.  
\end{enumerate}

The above specifies a \temph{call option}: $p$ acquires the right to buy the underlying $y$ from $q$.  A \temph{put option}---the right to sell bonds to $q$ at a predetermined price---is symmetric: $q$ deposits the strike price with $e$, and $p$ deposits the bonds it wishes to sell.

The specification above uses \temph{American} exercise: $p$ may exercise at or before $d$.  For \temph{European} exercise, $p$ may exercise only at $d$.

A \temph{strike price} can be incorporated by having $p$ deposit additional bonds \textcent$^{k'}_p$ with $e$ at establishment.  On exercise, $e$ transfers the strike price to $q$ along with the premium; on expiry, $e$ returns the strike price to $p$ along with the premium.

\item \textbf{Insurance:}  Party $p$ (the insured) pays a premium to escrow agent $e$, who disburses a payout if a specified event occurs by contract expiry $T$ (judged by $d_e^*$), as attested by $e$:

\begin{enumerate}
\item \textbf{Premium.}  $\{(p, \text{\textcent}^k_p),\; (e, \emptyset)\}$: $p$ deposits the premium with $e$.  
\item \textbf{Claim (event occurs).}  $\{(e, \text{\textcent}^{k'}_q),\; (p, \emptyset)\}$: $e$ transfers the payout $k' \gg k$ to $p$.  The payout is funded by the insurer $q$, who has deposited bonds with $e$ as reserves. 
\item \textbf{Expiry (no event).}  Once $d_e^* \ge T$ with no claim, $e$ transfers the premium to insurer $q$. 
\end{enumerate}

The escrow agent $e$ serves as the claims adjudicator.  A mutual insurance arrangement among a group of participants can be realised by each member depositing premiums into a shared escrow reserve.

\item  \textbf{Credit default swap (CDS):}  Party $p$ pays periodic premiums to $q$ (the protection seller), and $q$ compensates $p$ if a reference entity $r$ defaults.  Escrow agent $e$ adjudicates:

\begin{enumerate}
\item \textbf{Premium payments.}  Periodic swaps $\{(p, \text{\textcent}^{k_j}_{p,d_j}),\; (e, \emptyset)\}$ at dates $d_j$ judged by $d_e^*$: $p$ deposits premium bonds with $e$, forwarded to $q$. 
\item \textbf{Credit event.}  $\{(e, \text{\textcent}^K_q),\; (p, \emptyset)\}$: if $e$ determines that $r$ has defaulted, $e$ releases the reserve bonds of $q$ to $p$. 
\item \textbf{No credit event.}  Premium bonds are released to $q$, and at contract expiry $T$ (when $d_e^* \ge T$) reserve bonds are returned to $q$.  
\end{enumerate}

\item  \textbf{Letter of credit:}  A bank $b$ guarantees payment to seller $p$ on behalf of buyer $q$, with escrow agent $e$ verifying that contractual conditions (e.g.\ delivery of goods) are met:

\begin{enumerate}
\item \textbf{Issuance.}  $\{(b, \text{\textcent}^k_b),\; (e, \emptyset)\}$: bank $b$ deposits $b$-coins with $e$. 
\item \textbf{Presentation.}  $\{(e, \text{\textcent}^k_b),\; (p, \emptyset)\}$: upon $e$ verifying that $p$ has fulfilled the terms (e.g.\ delivery attestation), $e$ releases $b$-coins to seller $p$.  
\item \textbf{Reimbursement.}  $\{(q, \text{\textcent}^k_{q,d}),\; (b, \emptyset)\}$: buyer $q$ reimburses bank $b$ with $q$-bonds. 
\end{enumerate}

\item \textbf{Credit line:}  A lender $p$ commits to advance up to $k$ coins to borrower $q$, with periodic interest on drawn credit and line expiry at date $T$.  The agreement specifies the limit $k$, interest rate $\rho$ (assumed to yield integer interest amounts), payment dates $d_1, \ldots, d_n$, and expiry $T$.

\begin{enumerate}
\item \textbf{Establishment.}  $p$ mints $k$ coins and deposits them with escrow agent $e$: $\{(p, \text{\textcent}^k_p),\; (e, \emptyset)\}$.  $e$ tracks the drawn amount $k_d$, initially $0$.
\item \textbf{Draw.}  When $q$ draws $k' \le k - k_d$, $q$ mints a principal bond $\text{\textcent}^{k'}_{q,T}$ (maturing at line expiry) and interest bonds $\text{\textcent}^{\rho k'}_{q,d_j}$ for each $d_j > d_e^*$, and swaps with $e$: $\{(e, \text{\textcent}^{k'}_p),\; (q, \text{\textcent}^{k'}_{q,T} \cup \bigcup_{d_j > d_e^*} \text{\textcent}^{\rho k'}_{q,d_j})\}$.  $e$ then forwards the principal and interest bonds to $p$: $\{(e, \text{\textcent}^{k'}_{q,T} \cup \bigcup_{d_j > d_e^*} \text{\textcent}^{\rho k'}_{q,d_j}),\; (p, \emptyset)\}$, and updates the drawn amount $k_d$ to $k_d + k'$.
\item \textbf{Repayment.}  $q$ acquires $p$-coins (e.g.\ by trade with third parties or from its own holdings) and uses them to clear the principal bond via a Voluntary swap with $p$: $\{(q, \text{\textcent}^{k_{\text{rep}}}_p),\; (p, \text{\textcent}^{k_{\text{rep}}}_{q,T})\}$, returning the principal bond to $q$. $p$ then deposits the $p$-coins with $e$ via $\{(p, \text{\textcent}^{k_{\text{rep}}}_p),\; (e, \emptyset)\}$, restoring capacity: $e$ updates $k_d$ to $k_d - k_{\text{rep}}$.
\item \textbf{Expiry.}  Once $d_e^* \ge T$, $e$ returns the undrawn balance to $p$: $\{(e, \text{\textcent}^{k-k_d}_p),\; (p, \emptyset)\}$.
\end{enumerate}

\end{enumerate}
\section{Liquidity Measures}\label{sec:liquidity}

The cash, quick, and current liquidity ratios of corporate finance were adapted to grassroots cryptocurrencies in~\cite{shapiro2024gc} but only approximately, as the absence of maturity left no asset-class axis to stratify the numerators on.  Here we recover the corporate-finance ratios exactly, with bond maturity (mature, near-term, long-term) playing the role of asset class.

Recall that $d_p^*$ denotes the local current date of agent $p$ (Section~\ref{sec:bonds}).  Let $\nu_{p,t}(q)$ be the number of $p$-bonds with maturity date $\le t$ held by $q$ at present, where $p \ne q$ (zero if $p = q$, as $p$-bonds held by $p$ carry no value); in particular, $\nu_{p,d_p^*}(q)$ is the number of mature $p$-bonds (i.e.\ $p$-coins) held by $q$.  Let $\delta$ denote a near-term horizon (typically 90 days) and $\Delta$ the operating cycle (typically one year), with $0 < \delta < \Delta$.  The current liabilities of $p$ are $p$-bonds maturing within the operating cycle, $\sum_{q \in P} \nu_{p,\,d_p^*+\Delta}(q)$---the common denominator of all three ratios.

\mypara{Cash ratio}  The cash ratio measures resilience against a ``run on the bank'': the numerator counts only cash equivalents, i.e.\ foreign coins (mature bonds) held by $p$.
$$
\textit{Cash Ratio of } p = \frac{\sum_{r \in P} \nu_{r,\,d_p^*}(p)}{\sum_{q \in P} \nu_{p,\,d_p^*+\Delta}(q)}
$$

\mypara{Quick ratio}  The quick ratio extends the cash ratio to near-term assets: foreign bonds maturing within the horizon $\delta$.
$$
\textit{Quick Ratio of } p = \frac{\sum_{r \in P} \nu_{r,\,d_p^*+\delta}(p)}{\sum_{q \in P} \nu_{p,\,d_p^*+\Delta}(q)}
$$

\mypara{Current ratio}  The current ratio counts all current assets: foreign bonds maturing within the operating cycle.
$$
\textit{Current Ratio of } p = \frac{\sum_{r \in P} \nu_{r,\,d_p^*+\Delta}(p)}{\sum_{q \in P} \nu_{p,\,d_p^*+\Delta}(q)}
$$

With the common denominator and progressively wider numerators, the ratios nest: cash ratio $\le$ quick ratio $\le$ current ratio.  The current ratio admits assets that may not be realisable: foreign bonds that cannot be redeemed, bonds by insolvent issuers, and fake assets issued by sybil persons to inflate the balance sheet of $p$.

Appendix~\ref{app:liquidity-example} works out the three ratios for two agents of the village market scenario.
\section{Community Bank}\label{sec:community-bank}

A community bank, or credit union, may streamline liquidity and simplify payments within a community that employs grassroots bonds.  The bank would normally have a governing body and signatories as authorised by the community.  The community bank issues its own grassroots currency and may open credit lines with community members at its discretion (e.g., employing the objective measures for creditworthiness and liquidity discussed in Section~\ref{sec:liquidity}), just as a private banker would do.

A key characteristic of a community bank is that its members undertake to accept payments in the community currency, in addition to their personal currency.  Thus, transactions among liquid community members may include one-step payments in the community currency.  
Presumably, community members will be happy to use the community currency, knowing that they own the bank. Note that a community bank does not (necessarily) need initial capital, and may employ standard measures to increase its capital such as charging interest on drawn credit and transaction fees.

\mypara{Community mutual credit scenario} Consider a village with a community of 501 people in which every two villagers exchange 100 personal coins with each other. As a result, each villager will have 50,000 coins of other villagers, while issuing and transferring 50,000 of their own coins to others,  with the village achieving a total liquidity of 25,050,000 coins in circulation, solely based on mutual trust among villagers and without any external capital or credit.  Such liquidity may jumpstart the village's local economy, with the exposure of each villager to any other initially limited to 0.2\% of the total credit they have issued.  

\mypara{Bank lending via bonds}  The bank lends community coins to member $p$ in exchange for $p$-bonds with maturity and interest.  For example, the bank mints 10,000 community coins for $p$, and $p$ mints 10,500 $p$-bonds maturing in six months---500 bonds as interest.  The bank holds interest-bearing claims on $p$; $p$ holds liquid community coins accepted by all members.  This is an instance of the loan instruments of Section~\ref{sec:instruments} (zero-coupon, balloon, and fixed-payment loans), with the bank as lender.

\mypara{Deposits}  A member may deposit community coins with the bank and receive community bonds maturing at $d$ with interest---for example, 10,000 community coins deposited, 10,200 community bonds maturing in six months returned.  The bank's income is the spread between lending and deposit rates.

\mypara{Collateral}  Personal coins---those acquired by community members through bilateral mutual credit---serve as collateral for bank loans.  The bank may make the loan conditional on borrower $p$ depositing non-$p$-coins it holds into escrow (Section~\ref{sec:escrow}) as security.  If $p$ defaults, the escrowed bonds are released to the bank.  If $p$ fulfils the loan, the collateral is returned.  The bank does not hold or manage personal coins directly; it holds $p$-bonds and, when needed, escrowed collateral.

\mypara{Risk management}  The bank manages risk through the maturity structure of its loan portfolio, collateral requirements at origination, the liquidity ratios of Section~\ref{sec:liquidity} applied to its borrowers, and the ability to sell debt at a discount (Section~\ref{sec:instruments}, sale of debt).  As in traditional banking, a fixed-term loan cannot be recalled before maturity; the bank must arrange protection at origination or sell the debt to a third party.

\mypara{Mutual currency recognition}  An alternative to forming a community bank is for community members to mutually recognise each other's currencies: $p$ and $q$ agree to accept each other's coins as payment, each up to a per-issuer limit, as agreed by the parties.  This is a bilateral standing agreement that builds exposure organically as payments flow, without the upfront coin exchange of a credit line.  If all community members mutually recognise each other's coins, the community forms a currency area without a community bank.  Similarly, if two banks mutually recognise each other's currencies, their customers can transact across banks directly.  This mechanism enables bottom-up formation of currency areas; its full treatment is deferred to the grassroots currencies paper~\cite{shapiro2024gc}.
\section{Implementation in GLP}\label{sec:implementation}

Grassroots Logic Programs (GLP)~\cite{shapiro2025glp} is a multiagent, concurrent, logic programming language designed for programming digital social contracts.  A social contract is realised as a GLP program, which people load and activate voluntarily on their devices.

Here we implement the grassroots bonds transactions of Section~\ref{sec:bonds}---mint, advance-date, voluntary swap, pay, and redeem---as a GLP program, and demonstrate them through a simulated village economy with six agents exercising the full gamut of financial instruments.  

\subsection{Bond Agent}\label{sec:bond-agent}

The core of the implementation is the \emph{bond agent}, a GLP process that manages each person's holdings and executes transactions on their behalf.  The agent maintains a list of bonds, a serial counter for minting, and communicates with other agents via typed message streams.

A bond is represented as \texttt{bond(Issuer, Maturity, Serial)}.  The core operations correspond directly to the formal specification of Section~\ref{sec:bonds}:

\mypara{Mint}  The agent creates $k$ bonds with a given maturity and adds them to its holdings:
\begin{verbatim}
  agent(Id, [msg('_user', Id1, mint(K, Maturity))|UserIn], ...) :-
      create_bonds(Id?, Maturity?, K?, NextSerial?, NewBonds),
      append(Holdings?, NewBonds?, NewHoldings),
      agent(Id?, UserIn?, ..., NewHoldings?, ...).
\end{verbatim}

\mypara{Swap}  A swap is realised as a \emph{trade}: the proposer specifies what they give and what they want as lists of lots \texttt{lot(Issuer, Maturity, Count)}.  The agent selects matching bonds from its holdings; if successful, it sends the bonds and an unbound response variable to the counterparty.  The counterparty either accepts (binding the response to \texttt{trade\_accept(TheirBonds)}) or declines (binding it to \texttt{trade\_decline(ReturnedBonds)}).  This two-phase protocol---propose then accept/decline---implements the volitional multiagent atomic transaction of Section~\ref{sec:dsc}: both parties must consent for the swap to commit.

The proposer side of the trade protocol:
\begin{verbatim}
  do_trade_result(ok, Id, Target, WantSpec, Selected,
      Remaining, UserIn, NetIn, Outs, NextSerial) :-
      lookup_send(friend(Target?),
          msg(Id?, Target?,
              trade_propose(WantSpec?, Selected?, TradeResp)),
          Outs?, Outs1),
      inject_trade_result(TradeResp?, Target?,
          UserIn?, UserIn1),
      agent(Id?, UserIn1?, NetIn?, Outs1?, Remaining?,
            NextSerial?).
\end{verbatim}
The agent sends its selected bonds along with an unbound response variable \texttt{TradeResp} to the counterparty via the friend channel.  Concurrently, \texttt{inject\_trade\_result} monitors \texttt{TradeResp}---when the counterparty binds it (to accept or decline), the result is injected into the agent's input stream.  This is the volitional multiagent atomic transaction: the unbound variable is the pending consent, and binding it commits or aborts the swap.

Note that the offered bonds are locked until the counterparty responds; if this is a risk that initiator of the swap does not wish to take, they can use a time-bound escrow agent instead of a direct transaction.

\mypara{Unification}  All financial instruments---mutual credit, loans, pay, and redeem---are realised as combinations of mint and trade, with no dedicated commands or protocol machinery.  Symmetric mutual credit is two mints followed by a trade; an asymmetric loan is a mint by the lender, a trade proposal, and a mint by the borrower who then accepts; pay is a trade with an empty want-specification.  This unification simplifies the agent: the only operations are mint, trade, and escrow.

\mypara{Trade classification}  When the agent receives a trade proposal, it classifies it into one of three categories:
\begin{verbatim}
  classify_trade(_, [], _, payment).
  classify_trade(Id, _, [bond(I, 0, _)], redemption) :-
      Id? =?= I? | true.
  classify_trade(_, _, _, normal) :- otherwise | true.
\end{verbatim}
\emph{Pay} (no returned bonds) is auto-accepted---an agent cannot refuse accepting their own coins as payment.   \emph{Redeem} (exactly one of the agent's own coins is offered) is also auto-accepted if the requested bond is available; otherwise the agent responds with a menu of available bonds---one per issuer, earliest maturity---so the proposer can retry an informed redemption request. 

\mypara{Coin redemption} In the formal model, when $p$ redeems a $q$-coin from $q$ it may specify any bond held by $q$, in effect assuming that $p$ knows what bonds $q$ holds.  The redeemer specifies both the coin given and the bond received, so willing to accept any of several bonds means proposing a set of possible redemptions; the menu mechanism below picks one of them.  We approximate this in the GLP implementation as follows: if $q$ cannot respond with the requested bond, it reports to $p$ the type of bonds it holds (most mature of each type), allowing $p$ to choose from them.  The caveat is that these bonds are not ``locked'' until $p$ decides what to do, allowing $q$, as in practice, to ``conceal assets'', e.g.\ by transferring them to another person or even to a Sybil identity controlled by $q$.
While locking the reported assets until $p$ decides would avoid this problem, it also exposes $q$ to the risk of $p$ not responding, and even to coordinated attack, hence we chose not to do this. Still, high-risk/high-stakes transactions can use an escrow agent. 

\mypara{Escrow}  The escrow mechanism is a concurrent race between two clauses:
\begin{verbatim}
  escrow(T, Bonds, _,
      escrow_bonds(Bonds?), escrow_expired) :-
      wait_until(T?) | true.
  escrow(_, Bonds, cancel,
      escrow_cancelled, escrow_bonds(Bonds?)).
\end{verbatim}
The first clause suspends on a timer guard; the second suspends on the cancel signal.  Whichever commits first determines the outcome---time-release or cancellation---with the other alternative abandoned.  This employs GLP's committed-choice nondeterminism: two concurrent alternatives race, and the first to satisfy its guard wins.
The complete bond agent, including all helper procedures for bond selection, trade dispatch, redemption, and friend-channel management, is approximately 740~lines of typed GLP code, shown in Appendix~\ref{app:agent}.

\subsection{Village Market Demonstration}\label{sec:village-demo}

To demonstrate that the instruments of Sections~\ref{sec:instruments}--\ref{sec:escrow} compose into a functioning economy, we implemented a simulated village market with six agents: Alice~(baker), Bob~(farmer), Charlie~(carpenter), Diana~(doctor), Eve~(teacher), and Frank~(fisherman).  The scenario unfolds over a simulated month and exercises seven distinct financial operations:

\begin{enumerate}
\item \textbf{Symmetric mutual credit.} Alice and Bob each mint 15~coins and swap them 1-for-1, establishing mutual liquidity.  Similarly for Alice--Charlie, Charlie--Eve, and Eve--Frank.

\item \textbf{Asymmetric credit (loan).}  Diana lends Bob 20~diana-coins in exchange for 24~bob-bonds maturing on day~25---a 20\% interest rate.  Diana similarly lends Frank 15~coins for 18~bonds maturing day~28.

\item \textbf{Pay.}  Bob pays Alice 5~alice-coins for bread; Eve pays Charlie 6~charlie-coins for a bookshelf; Frank pays Diana 3~diana-coins for a medical checkup.

\item \textbf{Portfolio swap.}  Eve trades 5~frank-coins to Charlie for 5~alice-coins, acquiring the currency she needs to buy bread from Alice.

\item \textbf{Escrow.}  Charlie hires Frank to build a dock and deposits 5 frank-coins in escrow for Frank with a 7-day cancellation window; Charlie does not cancel, the timer expires, and the coins are released to Frank as payment for building a dock.

\item \textbf{Redeem.}  Frank redeems 5~diana-coins from Diana, reducing his exposure to Diana's currency.

\item \textbf{Sale of debt.}  Alice sells 10~bob-bonds to Eve for 4~frank-coins---a discounted sale reflecting Alice's desire to reduce exposure to Bob's debt.
\end{enumerate}

All six agents run concurrently as GLP processes, communicating via typed message streams over a simulated 6-way network.  Each agent emits a narrative trace describing its actions in natural language, alongside the raw GLP command/notification trace.  Figure~\ref{fig:village-market} shows the narrative output of all six agents.

\begin{figure*}[t]
\centering
\includegraphics[width=\textwidth]{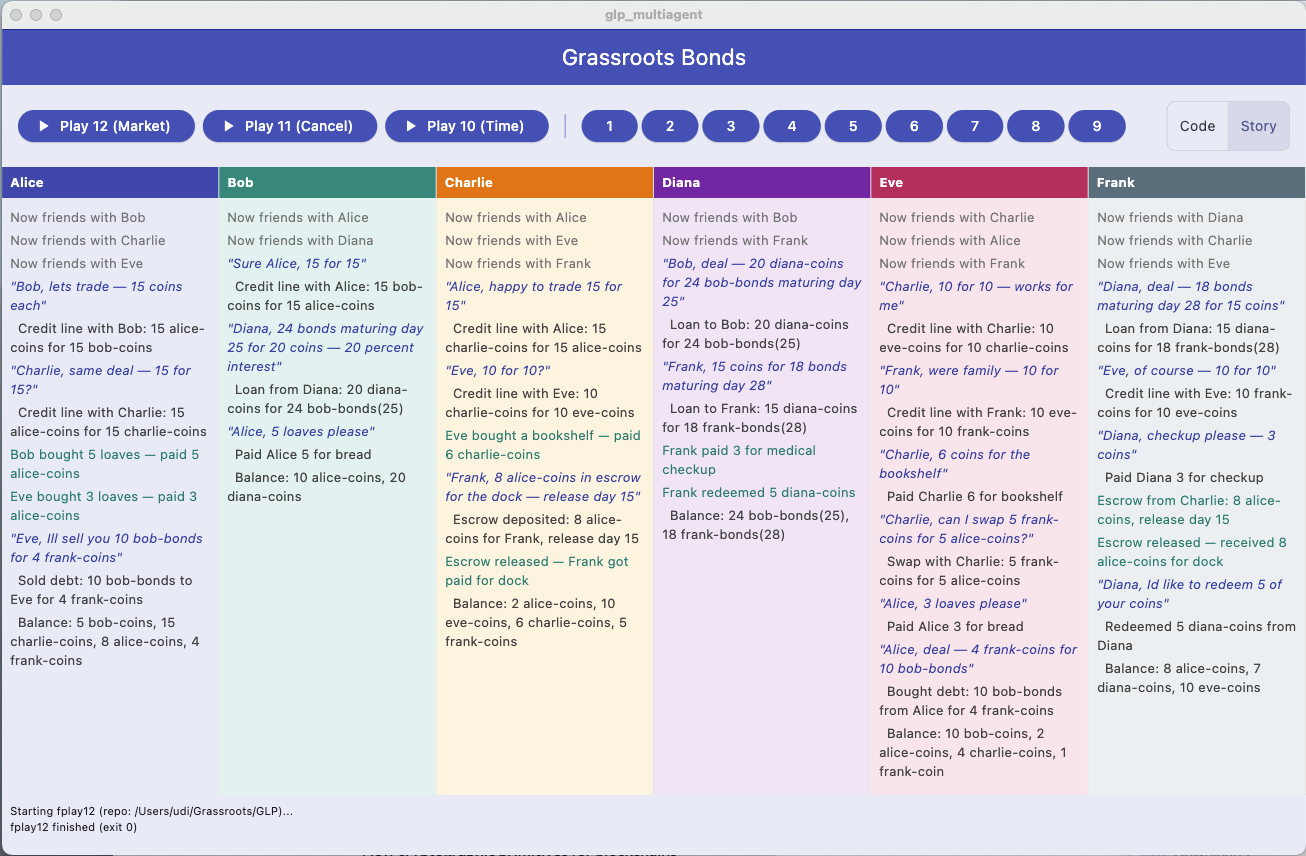}
\caption{Village Market: six-agent grassroots bond economy.  Each panel shows one agent's narrative trace.  The scenario exercises symmetric mutual credit, asymmetric credit (loans with interest), pay, portfolio swap, escrow (time-release), redeem, and sale of debt.  All agents run concurrently as GLP processes; the trace is produced by a running GLP program, the key part of it  shown in Appendix~\ref{app:agent}.}
\label{fig:village-market}
\end{figure*}

\subsection{Implementation Summary}\label{sec:impl-summary}

An intended smartphone deployment would include the bond agent (Appendix~\ref{app:agent}, ~740 lines of typed GLP code) and a UI; all other components are test infrastructure. The village market scenario of Section~\ref{sec:village-demo} is implemented by six scripted GLP actors totalling approximately 870 lines, each in its own module.
The implementation is open-source at \url{https://github.com/EShapiro2/GLP}.

The proof-of-concept represents each bond as a unit; a practical deployment would use range-denominated bonds to compress holdings of the same issuer and maturity into a single term.  The simulation uses a global clock shared by all agents; a multiagent deployment will follow the protocol, with each agent maintaining its own local date $d_p^*$ advanced via Advance-date.

\section{Related Work}\label{sec:related-work}

\subsection{Global Cryptocurrencies}\label{sec:crypto}

Global cryptocurrencies such as Bitcoin~\cite{nakamoto2008peer} and Ethereum~\cite{buterin2014next} are unbacked digital assets whose scarcity and transaction cost are controlled by the protocol.  Grassroots coins~\cite{shapiro2024gc} and the grassroots bonds introduced here differ from global cryptocurrencies in their foundations, architecture, and the financial instruments they support.  This section develops these contrasts along four axes: architectural design, source of value and liquidity, the nature of the financial instruments each system supports, and the attack surfaces each design exposes.

\mypara{Architecture}
Global cryptocurrencies maintain a single replicated ledger---a totally-ordered blockchain---via consensus among miners or validators.  This architecture entails well-known costs: proof-of-work entails high-cost computing~\cite{nakamoto2008peer}; proof-of-stake requires capital lockup~\cite{wood2014ethereum}; and all participants must obtain and maintain a copy of the global state.  Transaction throughput is bounded by block size and block interval, and transaction fees arise from competition for inclusion in the next block.

Grassroots coins and bonds employ a fundamentally different architecture.  Transaction cost is limited to operating a networked device.  The system is grassroots~\cite{shapiro2024grassroots}: multiple independent deployments can emerge without global resources and interoperate once interconnected.

The comparison table in~\cite{lewis2023grassroots} details 18 dimensions along which the two architectures differ, including: one cryptocurrency per blockchain vs.\ one per person, mutually pegged; competitive block creation vs.\ decentralised and cooperative; global all-to-all dissemination vs.\ local grassroots dissemination; probabilistic finality vs.\ definite finality; and a business model based on ICOs and miner remuneration vs.\ a public good.

\mypara{Source of value and liquidity}
Global cryptocurrencies derive their value from speculative market demand; they are unbacked, and liquidity is obtained by mining or purchasing with external capital.  Grassroots currencies are closer to `inside money'~\cite{cavalcanti1999inside}---a medium of exchange backed by private credit.  Each person backs their coins and bonds with their offerings: goods and services for people and corporations; interest and transaction fees for banks; taxes for municipalities; and fiat currencies for central banks~\cite{shapiro2024gc}.  Liquidity arises not from capital injection but from mutual credit lines among persons that trust each other, and bond redemption---the obligation to redeem any coin one has issued against any bond one holds---pegs mutually-liquid currencies at a 1:1 exchange rate.  Grassroots bonds extend this foundation by allowing credit to carry interest: lending liquid coins in exchange for future-maturity bonds, incentivising the provision of credit without requiring external capital.

\mypara{Financial instruments on blockchains}
The programmability of blockchains via smart contracts~\cite{szabo1997formalizing,buterin2014next,scilla2018} has given rise to Decentralised Finance (DeFi)~\cite{werner2022sok,schar2021defi}, which re-implements traditional financial instruments on-chain.  Lending protocols~\cite{gudgeon2020loanable,leshner2020compound} offer overcollateralised loans with algorithmically-determined interest rates; decentralised exchanges provide token trading via automated market makers; and stablecoins offer price-stable media of exchange.

These DeFi instruments have analogues among the grassroots bond instruments of Sections~\ref{sec:instruments}--\ref{sec:escrow}: loans, credit lines, sale of debt, and options are all expressible as volitional swaps of grassroots bonds, without requiring a blockchain or consensus.  Two differences are noteworthy.  First, DeFi instruments require overcollateralisation enforced by smart contracts because counterparty identity is pseudonymous and trust is absent; grassroots instruments operate among persons with mutual trust, and collateral (Section~\ref{sec:escrow}) is available but not mandatory.  Second, DeFi instruments incur gas fees for every state transition; grassroots instruments incur no fees beyond bilateral communication.

\mypara{Crypto-native constructs}
Three constructs native to the blockchain architecture have no counterpart in grassroots bonds: flash loans, automated market makers (AMMs), and maximal extractable value (MEV).  All three arise from properties of global consensus that grassroots systems lack by design.  A comprehensive systematisation identifies 181 real-world DeFi attack incidents exploiting these properties~\cite{zhou2023sokattacks}.

\emph{Flash loans}~\cite{qin2021attacking} are uncollateralised loans that exist within a single atomic blockchain transaction: a borrower takes a loan, uses the funds across arbitrarily many smart contracts, and repays within the same transaction; if repayment fails, the entire transaction reverts.  This is possible because blockchain transactions are atomic across the global state.  Grassroots bonds have no global atomic transactions spanning multiple parties, and so flash loans cannot arise.  Flash loans are the primary attack vector against DeFi protocols, with documented exploits yielding returns exceeding 500,000\%~\cite{qin2021attacking}; their impossibility eliminates that entire class of vulnerability.

\emph{AMMs} provide algorithmic price discovery via pooled liquidity and a bonding curve (e.g., the constant product formula $x \cdot y = k$).  They require globally-shared mutable state---the liquidity pool---maintained by consensus.  Grassroots bonds have no pooled global state; price discovery is via bilateral negotiation.  The trade-off is that grassroots bonds lack automated price discovery but avoid impermanent loss, sandwich attacks~\cite{daian2020flash}, and the concentration of value in pool contracts that creates targets for exploitation.

\emph{MEV}~\cite{daian2020flash} is the profit that miners or validators can extract by reordering, inserting, or censoring transactions within a block.  MEV manifests as frontrunning, backrunning, and sandwich attacks, and has been estimated to extract hundreds of millions of dollars annually from Ethereum users~\cite{qin2022quantifying}.  MEV is an artefact of global transaction ordering.  Grassroots bonds have no global ordering of transactions---each person's blocklace fragment is independent---and so MEV does not exist.

\mypara{Summary}
Grassroots coins and bonds occupy a different point in the design space than global cryptocurrencies and their DeFi instruments.  Where global cryptocurrencies achieve trustless operation through consensus at the cost of high transaction fees, capital requirements, and exposure to flash-loan attacks, MEV extraction, and sandwich attacks, grassroots bonds achieve consensus-free operation through bilateral trust at the cost of requiring social relationships for liquidity.  The financial instruments are comparable in expressiveness (Sections~\ref{sec:instruments}--\ref{sec:escrow}), but differ in their trust assumptions, cost structure, and attack surface.

\subsection{Finance}\label{sec:finance}
Here we survey additional related work: trust-based lending, informal community finance, financial contract specification, credit networks, IOU-based currencies, mutual credit systems, tokenised bonds, off-chain payment protocols, and monetary theory.

\mypara{Microfinance and peer-to-peer lending}
The Grameen Bank model demonstrated that trust-based, collateral-free lending can work at scale via group liability~\cite{morduch1999microfinance}.  Online peer-to-peer lending platforms extended this principle digitally: Lin et al.~\cite{lin2013friendship} showed that friendship networks on Prosper.com function as credible credit signals, increasing funding probability and predicting lower default rates; Iyer et al.~\cite{iyer2016screening} found that peer lenders predict defaults 45\% more accurately than credit scores.  Grassroots bonds formalise this trust infrastructure into tradeable instruments: where microfinance relies on group liability and P2P platforms on centralised matching, grassroots bonds encode bilateral trust directly as bond swaps, with creditworthiness assessable via the liquidity measures of Section~\ref{sec:liquidity}.  This is relevant to financial inclusion: the World Bank Global Findex~\cite{demirguckunt2022findex} estimates that 1.4 billion adults worldwide lack access to formal financial services, often in communities where mutual trust is abundant but capital is scarce---precisely the conditions under which grassroots bonds can bootstrap a local digital economy~\cite{shapiro2024gc}.

\mypara{Informal community finance}
Rotating savings and credit associations (ROSCAs)---known as chit funds, tontines, or \emph{susus} across cultures---are the dominant informal financial instrument in developing economies.  Besley et al.~\cite{besley1993roscas} proved formally that both random and bidding ROSCAs yield higher utility than autarky by enabling earlier access to indivisible goods through community-based mutual commitment, without formal institutions.  Ardener~\cite{ardener1964roscas} provided the foundational cross-cultural survey documenting their global prevalence.  Hawala networks~\cite{elqorchi2003hawala} transfer value across borders on pure trust, without formal banking infrastructure.  All three---ROSCAs, hawala, and grassroots bonds---share the same foundation: bilateral trust within a community substitutes for institutional intermediation.  Grassroots bonds can be viewed as a digital formalisation that extends these informal mechanisms with the full expressiveness of a bond market.

\mypara{Financial contract specification}
The foundational work on composable financial contract specification predates blockchain: Peyton Jones et al.~\cite{peytonjones2000composing} introduced a combinator library for describing financial contracts declaratively, inspiring subsequent blockchain domain-specific languages.  Marlowe~\cite{seijas2018marlowe}, Cardano's financial contract DSL, specifies instruments declaratively with exhaustive static analysis; Scilla~\cite{scilla2018} introduced automata-based contracts with formal safety guarantees.  These languages demonstrate that financial instruments can be specified declaratively, yet they compile to consensus-bound smart contracts.  Grassroots bonds take the same declarative approach---each instrument is a volitional atomic transaction (Section~\ref{sec:dsc})---but enforce it bilaterally rather than via consensus, with each sovereign guaranteeing the security of transactions in their currency~\cite{lewis2023grassroots}.  To the best of our knowledge, no prior financial-contract DSL achieves bilateral enforcement without recourse to a consensus-bound execution layer.

\mypara{Credit networks}
Credit networks~\cite{dandekar2011liquidity} study liquidity in networks of bilateral credit lines, with chain payments as the key instrument.  Grassroots currencies inherit this structure: mutual credit lines formed via coin exchange, together with coin redemption, provide for chain payments as in credit networks~\cite{shapiro2024gc}.  Much of the body of knowledge regarding liquidity in credit networks carries over to grassroots bonds, with the caveat that lack of liquidity not only causes chain payment failures but also devalues the currencies of the liquidity-constrained parties involved.

\mypara{IOU credit networks and personal currencies}
Fugger's proposal that money can be represented as IOUs in social trust networks~\cite{fugger2004money} is the earliest articulation of the idea underlying grassroots currencies.  Ripple~\cite{schwartz2014ripple} implemented IOU-based credit routing on a blockchain, but re-introduced global consensus.  SilentWhispers~\cite{malavolta2017silentwhispers} proposed a distributed, privacy-preserving credit network that requires no ledger---the closest prior work to fully consensus-free financial transactions---but it addresses only payment routing, not the richer financial instruments that grassroots bonds support.
Grassroots currencies~\cite{shapiro2024gc} implement Fugger's vision without blockchain: personal coins are IOUs traded via bilateral transactions, with coin redemption pegging mutually-liquid currencies at 1:1.  Grassroots bonds extend this by adding maturity dates, enabling interest-bearing credit.

\mypara{Mutual credit and community currencies}
Mutual credit systems are the closest precedent to grassroots bonds' approach.  Sardex, a mutual credit network in Sardinia, achieves significant real-economy impact; its cyclic transaction structure has been analysed in \emph{Nature Human Behaviour}~\cite{iosifidis2018sardex}.  The Sarafu community currency in Kenya~\cite{mattsson2022sarafu}, operating on the Celo distributed ledger, is an academically-studied digital community currency deployment.  Both systems create liquidity from bilateral trust without external capital---the same principle underlying grassroots bonds---but Sardex relies on a centralised operator and Sarafu on blockchain consensus.

Circles UBI~\cite{circles-UBI} implements blockchain-based personal currencies with a trust-network-mediated exchange rate, structurally similar to grassroots coin redemption among trusting parties; however, Circles enforces a uniform minting rate and relies on Ethereum miners for execution.  The Trustlines Network~\cite{trustlines} extends mutual credit (LETS) principles to Ethereum, resulting in mutual credit lines similar to those of grassroots currencies but re-introducing blockchain consensus.

\mypara{Tokenised bonds on blockchain}
On-chain bond issuance represents the blockchain-native approach to the same problem grassroots bonds address.  A Federal Reserve analysis~\cite{carapella2023tokenization} examines tokenisation of real-world assets including the Santander (2019) and EIB (2021) bond issuances on Ethereum, identifying smart contract designs and financial stability implications.
Security token offerings (STOs) extend this trend to debt instruments more broadly~\cite{lambert2022stos}.  These tokenised bonds and STOs differ from grassroots bonds in three respects: they require a public blockchain and its associated gas costs; they rely on custodians to bridge on-chain tokens with off-chain legal obligations; and they are issued by institutions rather than by any person.

\mypara{Traditional bond instruments}
The loan structures of Section~\ref{sec:instruments}---zero-coupon, balloon, and fixed-payment loans---correspond to standard fixed-income instruments as classified by Fabozzi~\cite{fabozzi2021bonds}.  The escrow-based instruments of Section~\ref{sec:escrow}---collateral, guarantees, and options---likewise have standard counterparts.  Grassroots bonds differ from traditional bonds not in their financial structure but in three respects: they are issued by any person rather than by institutions, traded bilaterally rather than on exchanges, and enforced bilaterally rather than by legal contract.  The economic effect is to democratise access to the bond market: any person can issue, hold, and trade bonds, with the community bank scenario of Section~\ref{sec:community-bank} illustrating how a village can achieve the financial functionality of a bank using only grassroots bonds.

\mypara{Payment and state channels}
Layer-2 scaling solutions reduce but do not eliminate reliance on blockchain consensus~\cite{gudgeon2020layer2}.  Payment channels such as the Lightning Network~\cite{poon2016bitcoin} enable off-chain bilateral transactions but fall back to the blockchain for dispute resolution.  State channels~\cite{dziembowski2018general} generalise this to arbitrary state transitions.  Boyen et al.~\cite{boyen2016blockchain} proposed blockchain-free cryptocurrencies using a DAG structure, a prior attempt at the same design goal as grassroots currencies.  Grassroots bonds' bilateral transactions resemble state channel updates in structure, but differ fundamentally: state channels derive their security from the ability to publish the latest state on-chain in case of dispute, while grassroots bonds derive theirs from bilateral enforcement with no blockchain fallback.  The atomic swap mechanism underlying grassroots bond transactions is related to cross-chain atomic swaps~\cite{herlihy2018atomic}, which achieve trustless exchange across blockchains; grassroots bonds achieve the same atomicity bilaterally, without any blockchain.

\mypara{Monetary theory}
Grassroots bonds create \emph{inside money}~\cite{cavalcanti1999inside,cavalcanti1999model}---a medium of exchange backed by private credit rather than sovereign authority.  Cavalcanti and Wallace~\cite{cavalcanti1999model} studied a population of bankers and nonbankers: bankers' financial histories are public, enabling punishment for deviation, while nonbankers' histories are private, making notes an essential medium of exchange.  In grassroots currencies, everyone is between a banker and a nonbanker~\cite{shapiro2024gc}: financial records are not public, but the holder of another person's coins can choose which coin or bond to redeem against it among those held by the other person, requiring the coin holder to have access to such knowledge.

Kocherlakota's result that ``money is memory''~\cite{kocherlakota1998money} is directly relevant: grassroots bonds replace monetary tokens with cryptographic records of bilateral obligations, with agents holding the distributed memory that makes money possible.  The Bank of England's account of how commercial banks create money through lending~\cite{mcleay2014money} describes a process structurally analogous to grassroots bond issuance: in both cases, the act of lending simultaneously creates both an asset (the loan) and a liability (the deposit or bond).  The theoretical foundation for such endogenous money creation through credit is established by Palley~\cite{palley2002endogenous}.  Brunnermeier et al.~\cite{brunnermeier2019digitalization} provide a comprehensive framework connecting inside/outside money, token/account forms, and digital currency competition, within which grassroots bonds occupy a distinctive position as peer-issued, record-based inside money operating without either sovereign backing or blockchain consensus.  Finally, Mundell's theory of optimum currency areas~\cite{mundell1961theory} is relevant to the regional bank and currency recognition mechanisms of Section~\ref{sec:community-bank}, which allow grassroots currency areas to form and merge bottom-up rather than top-down.
\section{Conclusion}\label{sec:conclusion}

Grassroots bonds extend grassroots coins~\cite{shapiro2024gc} with maturity dates, reframing coins---cash---as mature bonds.  The specification consists of five volitional transactions---mint, advance-date, voluntary swap, pay, and redeem---from which the full gamut of financial instruments emerges as conditions on bond swaps, including loans, sale of debt, forward contracts, options, and escrow-based instruments (collateral, guarantees, insurance, CDS, letters of credit, and credit lines).  Classical liquidity ratios apply to grassroots bonds, and the protocol is grassroots (Corollary~\ref{cor:bonds-grassroots}): independent deployments can emerge and interoperate without global resources.  We implement the specification in GLP~\cite{shapiro2025glp}.  The correctness of the implementation depends on GLP being secure; a secure implementation of GLP is ongoing work.

\mypara{AI Disclosure}
The authors used Anthropic's Claude models to assist with two parts of this work.  First, the GLP code in Appendix~\ref{app:agent} was derived from the mathematical specification in Section~\ref{sec:bonds} with substantial Claude assistance: the authors specified the transactions, supplied GLP language documentation, iteratively reviewed the generated code against the specification and against a running implementation, and made all design decisions.  Second, the prose of the paper received Claude assistance for drafting and proofreading under the same review discipline.  The authors verified the correctness and originality of all content, including all references.  Responsibility for all content rests with the authors.               

\bibliography{bib}

\newpage
\appendix
\section{Bond Agent Source Code}\label{app:agent}

The bond agent implementation consists of two GLP modules: \texttt{self.glp}, which defines all shared types and helper procedures visible to all modules via ancestor scoping; and \texttt{agent.glp}, which defines the bond agent and its local procedures.  Together they comprise approximately 740~lines of typed GLP code.  The full source, including the UI mediator, twelve test scenarios, and the six village market actors, is anonymised for double-blind review and available at submission.  The agent embodies the volitional transactions of Definition~\ref{def:gc-volitional}: Mint, Voluntary swap, and Redeem are direct primitives; Pay is realised as a Voluntary swap with empty want-specification (Unification paragraph, Section~\ref{sec:bond-agent}); Advance-date is implicit in the runtime clock.

\subsection*{Module 1: Shared Types and Helpers (\texttt{self.glp})}

{\small
\begin{verbatim}
%% self.glp — Bonds V2 shared type vocabulary
%%
%% All protocol types for the Grassroots Bonds application.
%% Defined once here, visible to all modules via ancestor scoping.

-module(bonds).
-mode(system).

%% Stream(X) and Channel(In, Out) are predefined in prelude:
%%   Stream(X) ::= [] ; [X | Stream(X)].
%%   Channel(In, Out) ::= ch(In, Out?).

%% Bond representation: bond(Issuer, Maturity, Serial)
Bond ::= bond(Constant, Constant, Constant).

%% Trade types
Lot ::= lot(Constant, Constant, Constant).
TradeResponse ::= trade_accept(Stream(Bond))
               ; trade_decline(Stream(Bond))
               ; trade_decline_menu(Stream(Bond), Stream(Bond)).

%% Escrow cancel signal
EscrowCancel ::= cancel.

%% Escrow results
EscrowBenResult ::= escrow_bonds(Stream(Bond))
                  ; escrow_cancelled.
EscrowDepResult ::= escrow_bonds(Stream(Bond))
                  ; escrow_expired.

%% Friend messages (after connection established)
FriendContent ::= response(Response)
                ; text(Constant)
                ; trade_propose(Stream(Lot), Stream(Bond),
                    TradeResponse?)
                ; escrow_offer(Constant, EscrowBenResult).
FriendMsg     ::= msg(Constant, Constant, FriendContent).
FriendChannel ::= ch(Stream(FriendMsg),
                     Stream(FriendMsg)?).

Response ::= accept(FriendChannel) ; no.
Decision ::= yes ; no.

%% Network content (2-arg cold-call)
NetColdCall ::= intro(Constant, Response?).

%% User-to-agent message content
UserContent ::= mint(Constant, Constant)
  ; balance ; done
  ; connect(Constant)
  ; decision(Decision, Constant, PendingValue)
  ; response(Response)
  ; trade(Constant, Stream(Lot), Stream(Lot))
  ; accept_trade(Constant, PendingValue)
  ; reject_trade(Constant, PendingValue)
  ; deposit_escrow(Constant, Stream(Lot), Constant)
  ; cancel_escrow(Constant, PendingValue).

%% Pending values stored by mediator
PendingValue ::= response(Response?)
  ; trade_pending(TradeResponse?, Stream(Lot),
      Stream(Bond))
  ; escrow_pending(EscrowCancel?)
  ; error.

%% Agent-to-user message content
AgentContent ::= minted(Constant, Constant)
  ; balance_report(Stream(Bond))
  ; befriend(Constant, Response?)
  ; connected(Constant)
  ; rejected ; rejected(Constant)
  ; trade_proposed(Constant, Stream(Lot),
      TradeResponse?, Stream(Bond))
  ; trade_completed(Constant)
  ; trade_failed(Constant)
  ; trade_returned(Constant)
  ; escrow_deposited(Constant, Constant,
      EscrowCancel?)
  ; escrow_received(Constant, Constant)
  ; escrow_released(Constant)
  ; escrow_cancelled(Constant)
  ; escrow_expired(Constant)
  ; escrow_returned(Constant)
  ; escrow_failed(Constant)
  ; trade_returned_menu(Constant, Stream(Bond)).

AgentToUserMsg     ::= msg(Constant, Constant,
                           AgentContent).
MediatorToAgentMsg ::= msg(Constant, Constant,
                           UserContent).

%% Network input message
NetInMsg ::= msg(Constant, NetColdCall)
           ; msg(Constant, Constant, FriendContent).

%% Agent input stream (mediator + injected results)
UserInMsg ::= msg(Constant, Constant, UserContent)
  ; trade_complete(Constant, Stream(Bond))
  ; trade_returned_bonds(Constant, Stream(Bond))
  ; escrow_ben_released(Constant, Stream(Bond))
  ; escrow_ben_cancelled(Constant)
  ; escrow_dep_expired(Constant)
  ; escrow_dep_returned(Constant, Stream(Bond))
  ; trade_returned_bonds_menu(Constant,
      Stream(Bond), Stream(Bond)).

%% Output types (type union of agent + friend content)
OutputContent ::= AgentContent ; FriendContent.
OutputMsg ::= msg(Constant, Constant, OutputContent)
            ; msg(Constant, NetColdCall).

OutputKey   ::= '_user' ; '_net' ; friend(Constant).
OutputEntry ::= output(OutputKey, Stream(OutputMsg)?).

AgentChannel ::= ch(Stream(AgentToUserMsg),
                    Stream(MediatorToAgentMsg)?).

%% Shared helper procedures

procedure merge(Stream(X)?, Stream(X)?, Stream(X)).
merge([X|Xs], Ys, [X?|Zs?]) :-
    merge(Ys?, Xs?, Zs).
merge(Xs, [Y|Ys], [Y?|Zs?]) :-
    merge(Xs?, Ys?, Zs).
merge([], Ys, Ys?).
merge(Xs, [], Xs?).

procedure lookup_send(OutputKey?, OutputMsg?,
    Stream(OutputEntry)?, Stream(OutputEntry)).
lookup_send(Key, Msg, Outs, Outs1?) :-
    ground(Key?) |
    lookup_send_step(Key?, Msg?, Outs?, Outs1).

procedure lookup_send_step(OutputKey?, OutputMsg?,
    Stream(OutputEntry)?, Stream(OutputEntry)).
lookup_send_step(Key, Msg,
    [output(K, [Msg?|Out1?])|Rest],
    [output(K?, Out1)|Rest?]) :-
    Key? =?= K? | true.
lookup_send_step(Key, Msg,
    [output(K, Out?)|Rest],
    [output(K?, Out)|Rest1?]) :-
    otherwise |
    lookup_send_step(Key?, Msg?, Rest?, Rest1).
lookup_send_step(_, _, [], []).

procedure add_output(OutputKey?, Stream(OutputMsg),
    Stream(OutputEntry)?, Stream(OutputEntry)).
add_output(Key, Out?, Outs, [output(Key?, Out)|Outs?]).

procedure close_outputs(Stream(OutputEntry)?).
close_outputs([output(_, [])|Outs]) :-
    close_outputs(Outs?).
close_outputs([]).

procedure inject_msg(Response?, Constant?, Constant?,
    Stream(UserInMsg)?, Stream(UserInMsg)).
inject_msg(Resp, Target, Id, Ys,
    [msg(Target?, Id?, response(Resp?))|Ys?]) :-
    known(Resp?) | true.
inject_msg(Resp, Target, Id, [Y|Ys], [Y?|Ys1?]) :-
    inject_msg(Resp?, Target?, Id?, Ys?, Ys1).

procedure create_bonds(Constant?, Constant?,
    Constant?, Constant?, Stream(Bond)).
create_bonds(Issuer, Maturity, K, Serial,
    [bond(Issuer?, Maturity?, Serial?)|Rest?]) :-
    K? > 0 |
    K1 := K? - 1, S1 := Serial? + 1,
    create_bonds(Issuer?, Maturity?, K1?, S1?, Rest).
create_bonds(_, _, 0, _, []).

procedure append(Stream(Bond)?, Stream(Bond)?,
    Stream(Bond)).
append([X|Xs], Ys, [X?|Zs?]) :-
    append(Xs?, Ys?, Zs).
append([], Ys, Ys?).

procedure select_bonds_exact(Constant?, Constant?,
    Constant?, Stream(Bond)?, Constant,
    Stream(Bond), Stream(Bond)).
select_bonds_exact(_, _, 0, Hs, ok, [], Hs?).
select_bonds_exact(Issuer, Maturity, K,
    [bond(I, M, S)|Rest], Status?,
    [bond(I?, M?, S?)|Sel?], Rem?) :-
    K? > 0, Issuer? =?= I?, Maturity? =?= M? |
    K1 := K? - 1,
    select_bonds_exact(Issuer?, Maturity?, K1?,
        Rest?, Status, Sel, Rem).
select_bonds_exact(Issuer, Maturity, K, [B|Rest],
    Status?, Sel?, [B?|Rem?]) :-
    otherwise |
    select_bonds_exact(Issuer?, Maturity?, K?,
        Rest?, Status, Sel, Rem).
select_bonds_exact(_, _, K, [], fail, [], []) :-
    K? > 0 | true.

procedure select_bonds_by_spec(Stream(Lot)?,
    Stream(Bond)?, Constant,
    Stream(Bond), Stream(Bond)).
select_bonds_by_spec(
    [lot(Issuer, Maturity, Count)|Rest], Holdings,
    Status?, AllSel?, FinalRem?) :-
    ground(Issuer?), ground(Maturity?),
    ground(Count?) |
    select_bonds_exact(Issuer?, Maturity?, Count?,
        Holdings?, LotStatus, LotSel, LotRem),
    select_by_spec_continue(LotStatus?, Rest?,
        LotSel?, LotRem?,
        Status, AllSel, FinalRem).
select_bonds_by_spec([], Holdings, ok, [],
    Holdings?).

procedure select_by_spec_continue(Constant?,
    Stream(Lot)?, Stream(Bond)?, Stream(Bond)?,
    Constant, Stream(Bond), Stream(Bond)).
select_by_spec_continue(ok, Rest, LotSel, LotRem,
    Status?, AllSel?, FinalRem?) :-
    select_bonds_by_spec(Rest?, LotRem?,
        Status, RestSel, FinalRem),
    append(LotSel?, RestSel?, AllSel).
select_by_spec_continue(fail, _, LotSel, LotRem,
    fail, LotSel?, LotRem?).

procedure bind_trade_accept(TradeResponse,
    Stream(Bond)?).
bind_trade_accept(trade_accept(Bonds?), Bonds).

procedure bind_trade_decline(TradeResponse,
    Stream(Bond)?).
bind_trade_decline(trade_decline(Bonds?), Bonds).

procedure inject_trade_result(TradeResponse?,
    Constant?, Stream(UserInMsg)?,
    Stream(UserInMsg)).
inject_trade_result(trade_accept(TheirBonds), From,
    Ys, [trade_complete(From?, TheirBonds?)|Ys?]) :-
    ground(From?), ground(TheirBonds?) | true.
inject_trade_result(trade_decline(OurBonds), From,
    Ys,
    [trade_returned_bonds(From?, OurBonds?)|Ys?]) :-
    ground(From?), ground(OurBonds?) | true.
inject_trade_result(
    trade_decline_menu(OurBonds, Menu), From, Ys,
    [trade_returned_bonds_menu(From?, OurBonds?,
        Menu?)|Ys?]) :-
    ground(From?), ground(OurBonds?),
    ground(Menu?) | true.
inject_trade_result(Resp, From, [Y|Ys],
    [Y?|Ys1?]) :-
    inject_trade_result(Resp?, From?, Ys?, Ys1).

procedure classify_trade(Constant?, Stream(Lot)?,
    Stream(Bond)?, Constant).
classify_trade(_, [], _, payment).
classify_trade(Id, _, [bond(I, 0, _)], redemption) :-
    Id? =?= I? | true.
classify_trade(_, _, _, normal) :-
    otherwise | true.

procedure build_menu(Constant?, Stream(Bond)?,
    Stream(Bond)).
build_menu(Id, Holdings, Menu?) :-
    build_menu_acc(Id?, Holdings?, [], Menu).

procedure build_menu_acc(Constant?, Stream(Bond)?,
    Stream(Bond)?, Stream(Bond)).
build_menu_acc(Id, [bond(I, _, _)|Rest], Acc,
    Menu?) :-
    Id? =?= I? |
    build_menu_acc(Id?, Rest?, Acc?, Menu).
build_menu_acc(Id, [bond(I, M, S)|Rest], Acc,
    Menu?) :-
    otherwise |
    menu_update(I?, M?, S?, Acc?, Acc1),
    build_menu_acc(Id?, Rest?, Acc1?, Menu).
build_menu_acc(_, [], Acc, Acc?).

procedure menu_update(Constant?, Constant?,
    Constant?, Stream(Bond)?, Stream(Bond)).
menu_update(I, M, S, [bond(I2, M2, _)|Rest],
    [bond(I2?, M?, S?)|Rest?]) :-
    I? =?= I2?, M? < M2? | true.
menu_update(I, _, _, [bond(I2, M2, S2)|Rest],
    [bond(I2?, M2?, S2?)|Rest?]) :-
    I? =?= I2? | true.
menu_update(I, M, S, [B|Rest], [B?|Rest1?]) :-
    otherwise |
    menu_update(I?, M?, S?, Rest?, Rest1).
menu_update(I, M, S, [], [bond(I?, M?, S?)]).

procedure escrow(Constant?, Stream(Bond)?,
    EscrowCancel?, EscrowBenResult, EscrowDepResult).
escrow(T, Bonds, _,
    escrow_bonds(Bonds?), escrow_expired) :-
    wait_until(T?) | true.
escrow(_, Bonds, cancel,
    escrow_cancelled, escrow_bonds(Bonds?)).

procedure inject_escrow_ben_result(EscrowBenResult?,
    Constant?, Stream(UserInMsg)?,
    Stream(UserInMsg)).
inject_escrow_ben_result(escrow_bonds(Bonds), From,
    Ys,
    [escrow_ben_released(From?, Bonds?)|Ys?]) :-
    ground(From?), ground(Bonds?) | true.
inject_escrow_ben_result(escrow_cancelled, From, Ys,
    [escrow_ben_cancelled(From?)|Ys?]) :-
    ground(From?) | true.
inject_escrow_ben_result(Resp, From, [Y|Ys],
    [Y?|Ys1?]) :-
    inject_escrow_ben_result(Resp?, From?,
        Ys?, Ys1).

procedure inject_escrow_dep_result(EscrowDepResult?,
    Constant?, Stream(UserInMsg)?,
    Stream(UserInMsg)).
inject_escrow_dep_result(escrow_bonds(Bonds), Target,
    Ys,
    [escrow_dep_returned(Target?, Bonds?)|Ys?]) :-
    ground(Target?), ground(Bonds?) | true.
inject_escrow_dep_result(escrow_expired, Target, Ys,
    [escrow_dep_expired(Target?)|Ys?]) :-
    ground(Target?) | true.
inject_escrow_dep_result(Resp, Target, [Y|Ys],
    [Y?|Ys1?]) :-
    inject_escrow_dep_result(Resp?, Target?,
        Ys?, Ys1).

procedure bind_escrow_cancel(EscrowCancel).
bind_escrow_cancel(cancel).

procedure lookup_pending(ReqId?, PendingValue,
    Stream(PendingEntry)?, Stream(PendingEntry)).
lookup_pending(ReqId, Val?,
    [pending(Id, Val) | Rest], Rest?) :-
    ReqId? =?= Id? | true.
lookup_pending(ReqId, Val?, [P | Rest],
    [P? | Rest1?]) :-
    otherwise |
    lookup_pending(ReqId?, Val, Rest?, Rest1).
lookup_pending(_, error, [], []).
\end{verbatim}
}

\subsection*{Module 2: Bond Agent (\texttt{agent.glp})}

{\small
\begin{verbatim}
%% agent.glp — Bond agent module (v2)
%%
%% Exports: agent/6
%% Types and shared helpers: from self.glp

-module(agent).
-mode(system).

%% Merge — local copy (Stream(X) subtyping limitation)
procedure merge(Stream(X)?, Stream(X)?, Stream(X)).
merge([X|Xs], Ys, [X?|Zs?]) :-
    merge(Ys?, Xs?, Zs).
merge(Xs, [Y|Ys], [Y?|Zs?]) :-
    merge(Xs?, Ys?, Zs).
merge([], Ys, Ys?).
merge(Xs, [], Xs?).

%% Response handling — local (calls merge)
procedure bind_response(Decision?, Constant?,
    Response, Stream(OutputEntry)?,
    Stream(OutputEntry), Stream(NetInMsg)?,
    Stream(NetInMsg)).
bind_response(yes, From, accept(RetCh?),
    Outs, Outs1?, In, In1?) :-
    new_channel(RetCh, LocalCh) |
    handle_response(accept(LocalCh?), From?,
        Outs?, Outs1, In?, In1).
bind_response(no, _, no, Outs, Outs1?, In, In?) :-
    lookup_send('_user',
        msg(agent, '_user', rejected), Outs?, Outs1).

procedure handle_response(Response?, Constant?,
    Stream(OutputEntry)?, Stream(OutputEntry),
    Stream(NetInMsg)?, Stream(NetInMsg)).
handle_response(accept(ch(FIn, FOut?)), From,
    Outs, Outs2?, In, In1?) :-
    ground(From?) |
    add_output(friend(From?), FOut, Outs?, Outs1),
    lookup_send('_user',
        msg(agent, '_user', connected(From?)),
        Outs1?, Outs2),
    merge(In?, FIn?, In1).
handle_response(no, From, Outs, Outs1?, In, In?) :-
    ground(From?) |
    lookup_send('_user',
        msg(agent, '_user', rejected(From?)),
        Outs?, Outs1).

%% Trade — select bonds then send or fail
procedure do_trade(Constant?, Constant?,
    Stream(Lot)?, Stream(Lot)?, Stream(Bond)?,
    Stream(UserInMsg)?, Stream(NetInMsg)?,
    Stream(OutputEntry)?, Constant?).
do_trade(Id, Target, GiveSpec, WantSpec, Holdings,
    UserIn, NetIn, Outs, NextSerial) :-
    select_bonds_by_spec(GiveSpec?, Holdings?,
        Status, Selected, Remaining),
    do_trade_result(Status?, Id?, Target?,
        WantSpec?, Selected?, Remaining?,
        UserIn?, NetIn?, Outs?, NextSerial?).

procedure do_trade_result(Constant?, Constant?,
    Constant?, Stream(Lot)?, Stream(Bond)?,
    Stream(Bond)?, Stream(UserInMsg)?,
    Stream(NetInMsg)?, Stream(OutputEntry)?,
    Constant?).
do_trade_result(ok, Id, Target, WantSpec, Selected,
    Remaining, UserIn, NetIn, Outs, NextSerial) :-
    lookup_send(friend(Target?),
        msg(Id?, Target?,
            trade_propose(WantSpec?, Selected?,
                TradeResp)),
        Outs?, Outs1),
    inject_trade_result(TradeResp?, Target?,
        UserIn?, UserIn1),
    agent(Id?, UserIn1?, NetIn?, Outs1?,
        Remaining?, NextSerial?).
do_trade_result(fail, Id, Target, _, Selected,
    Remaining, UserIn, NetIn, Outs, NextSerial) :-
    append(Selected?, Remaining?, OrigHoldings),
    lookup_send('_user',
        msg(agent, '_user', trade_failed(Target?)),
        Outs?, Outs1),
    agent(Id?, UserIn?, NetIn?, Outs1?,
        OrigHoldings?, NextSerial?).

%% Fill trade or decline
procedure handle_trade_fill(Constant?, Constant?,
    Constant?, TradeResponse, Stream(Bond)?,
    Stream(Bond)?, Stream(Bond)?,
    Stream(UserInMsg)?, Stream(NetInMsg)?,
    Stream(OutputEntry)?, Constant?).
handle_trade_fill(ok, Id, From,
    trade_accept(Selected?), OfferedBonds, Selected,
    Remaining, UserIn, NetIn, Outs, NextSerial) :-
    append(Remaining?, OfferedBonds?, NewHoldings),
    lookup_send('_user',
        msg(agent, '_user', trade_completed(From?)),
        Outs?, Outs1),
    agent(Id?, UserIn?, NetIn?, Outs1?,
        NewHoldings?, NextSerial?).
handle_trade_fill(fail, Id, From,
    trade_decline(OfferedBonds?), OfferedBonds,
    Selected, Remaining, UserIn, NetIn, Outs,
    NextSerial) :-
    append(Selected?, Remaining?, OrigHoldings),
    lookup_send('_user',
        msg(agent, '_user', trade_failed(From?)),
        Outs?, Outs1),
    agent(Id?, UserIn?, NetIn?, Outs1?,
        OrigHoldings?, NextSerial?).

%% Trade dispatch — auto-accept vs. user decision
procedure trade_dispatch(Constant?, Constant?,
    Constant?, Stream(Lot)?, Stream(Bond)?,
    TradeResponse, Stream(Bond)?,
    Stream(UserInMsg)?, Stream(NetInMsg)?,
    Stream(OutputEntry)?, Constant?).

trade_dispatch(payment, Id, From, _, OfferedBonds,
    trade_accept([]), Holdings, UserIn, NetIn,
    Outs, NextSerial) :-
    append(Holdings?, OfferedBonds?, NewHoldings),
    lookup_send('_user',
        msg(agent, '_user', trade_completed(From?)),
        Outs?, Outs1),
    agent(Id?, UserIn?, NetIn?, Outs1?,
        NewHoldings?, NextSerial?).

trade_dispatch(redemption, Id, From, WantSpec,
    OfferedBonds, TradeResp?, Holdings, UserIn,
    NetIn, Outs, NextSerial) :-
    select_bonds_by_spec(WantSpec?, Holdings?,
        Status, Selected, Remaining),
    redemption_result(Status?, Id?, From?,
        TradeResp, OfferedBonds?, Selected?,
        Remaining?, UserIn?, NetIn?, Outs?,
        NextSerial?).

trade_dispatch(normal, Id, From, WantSpec,
    OfferedBonds, TradeResp?, Holdings, UserIn,
    NetIn, Outs, NextSerial) :-
    lookup_send('_user',
        msg(agent, '_user',
            trade_proposed(From?, WantSpec?,
                TradeResp, OfferedBonds?)),
        Outs?, Outs1),
    agent(Id?, UserIn?, NetIn?, Outs1?,
        Holdings?, NextSerial?).

%% Redemption result
procedure redemption_result(Constant?, Constant?,
    Constant?, TradeResponse, Stream(Bond)?,
    Stream(Bond)?, Stream(Bond)?,
    Stream(UserInMsg)?, Stream(NetInMsg)?,
    Stream(OutputEntry)?, Constant?).

redemption_result(ok, Id, From,
    trade_accept(Selected?), OfferedBonds, Selected,
    Remaining, UserIn, NetIn, Outs, NextSerial) :-
    append(Remaining?, OfferedBonds?, NewHoldings),
    lookup_send('_user',
        msg(agent, '_user', trade_completed(From?)),
        Outs?, Outs1),
    agent(Id?, UserIn?, NetIn?, Outs1?,
        NewHoldings?, NextSerial?).

redemption_result(fail, Id, From,
    trade_decline_menu(OfferedBonds?, Menu?),
    OfferedBonds, Selected, Remaining,
    UserIn, NetIn, Outs, NextSerial) :-
    append(Selected?, Remaining?, OrigHoldings),
    redemption_reject(Id?, From?, OrigHoldings?,
        Menu, UserIn?, NetIn?, Outs?, NextSerial?).

procedure redemption_reject(Constant?, Constant?,
    Stream(Bond)?, Stream(Bond),
    Stream(UserInMsg)?, Stream(NetInMsg)?,
    Stream(OutputEntry)?, Constant?).
redemption_reject(Id, From, OrigHoldings, Menu?,
    UserIn, NetIn, Outs, NextSerial) :-
    ground(OrigHoldings?) |
    build_menu(Id?, OrigHoldings?, Menu),
    lookup_send('_user',
        msg(agent, '_user', trade_failed(From?)),
        Outs?, Outs1),
    agent(Id?, UserIn?, NetIn?, Outs1?,
        OrigHoldings?, NextSerial?).

%% Escrow — select bonds and create escrow or fail
procedure do_deposit_escrow(Constant?, Constant?,
    Stream(Lot)?, Constant?, Stream(Bond)?,
    Stream(UserInMsg)?, Stream(NetInMsg)?,
    Stream(OutputEntry)?, Constant?).
do_deposit_escrow(Id, Target, GiveSpec, ReleaseTime,
    Holdings, UserIn, NetIn, Outs, NextSerial) :-
    select_bonds_by_spec(GiveSpec?, Holdings?,
        Status, Selected, Remaining),
    do_deposit_escrow_result(Status?, Id?, Target?,
        ReleaseTime?, Selected?, Remaining?,
        UserIn?, NetIn?, Outs?, NextSerial?).

procedure do_deposit_escrow_result(Constant?,
    Constant?, Constant?, Constant?, Stream(Bond)?,
    Stream(Bond)?, Stream(UserInMsg)?,
    Stream(NetInMsg)?, Stream(OutputEntry)?,
    Constant?).
do_deposit_escrow_result(ok, Id, Target, ReleaseTime,
    Selected, Remaining, UserIn, NetIn, Outs,
    NextSerial) :-
    ground(Selected?) |
    escrow(ReleaseTime?, Selected?, CancelSignal?,
        BenResult, DepResult),
    inject_escrow_dep_result(DepResult?, Target?,
        UserIn?, UserIn1),
    lookup_send(friend(Target?),
        msg(Id?, Target?,
            escrow_offer(ReleaseTime?, BenResult?)),
        Outs?, Outs1),
    lookup_send('_user',
        msg(agent, '_user',
            escrow_deposited(Target?, ReleaseTime?,
                CancelSignal)),
        Outs1?, Outs2),
    agent(Id?, UserIn1?, NetIn?, Outs2?,
        Remaining?, NextSerial?).
do_deposit_escrow_result(fail, Id, Target, _,
    Selected, Remaining, UserIn, NetIn, Outs,
    NextSerial) :-
    append(Selected?, Remaining?, OrigHoldings),
    lookup_send('_user',
        msg(agent, '_user', escrow_failed(Target?)),
        Outs?, Outs1),
    agent(Id?, UserIn?, NetIn?, Outs1?,
        OrigHoldings?, NextSerial?).

%% Bond agent — exported entry point
exported procedure agent(Constant?,
    Stream(UserInMsg)?, Stream(NetInMsg)?,
    Stream(OutputEntry)?, Stream(Bond)?, Constant?).

%% Mint
agent(Id,
    [msg('_user', Id1, mint(K, Maturity))|UserIn],
    NetIn, Outs, Holdings, NextSerial) :-
    Id? =?= Id1?, ground(K?), ground(Maturity?) |
    NewNextSerial := NextSerial? + K?,
    create_bonds(Id?, Maturity?, K?, NextSerial?,
        NewBonds),
    append(Holdings?, NewBonds?, NewHoldings),
    lookup_send('_user',
        msg(agent, '_user', minted(K?, Maturity?)),
        Outs?, Outs1),
    agent(Id?, UserIn?, NetIn?, Outs1?,
        NewHoldings?, NewNextSerial?).

%% Balance
agent(Id, [msg('_user', Id1, balance)|UserIn],
    NetIn, Outs, Holdings, NextSerial) :-
    Id? =?= Id1?, ground(Holdings?) |
    lookup_send('_user',
        msg(agent, '_user',
            balance_report(Holdings?)),
        Outs?, Outs1),
    agent(Id?, UserIn?, NetIn?, Outs1?,
        Holdings?, NextSerial?).

%% Done
agent(Id, [msg('_user', Id1, done)|_], _, Outs,
    _, _) :-
    Id? =?= Id1? | close_outputs(Outs?).

%% Connect (cold call)
agent(Id,
    [msg('_user', Id1, connect(Target))|UserIn],
    NetIn, Outs, Holdings, NextSerial) :-
    Id? =?= Id1?, ground(Target?) |
    lookup_send('_net',
        msg(Target?, intro(Id?, Resp)),
        Outs?, Outs1),
    inject_msg(Resp?, Target?, Id?,
        UserIn?, UserIn1),
    agent(Id?, UserIn1?, NetIn?, Outs1?,
        Holdings?, NextSerial?).

%% Decision on cold-call
agent(Id, [msg('_user', Id1,
    decision(Dec, From, response(Resp?)))|UserIn],
    NetIn, Outs, Holdings, NextSerial) :-
    Id? =?= Id1? |
    bind_response(Dec?, From?, Resp,
        Outs?, Outs1, NetIn?, NetIn1),
    agent(Id?, UserIn?, NetIn1?, Outs1?,
        Holdings?, NextSerial?).

%% Cold-call response (injected)
agent(Id,
    [msg(From, Id1, response(Resp))|UserIn],
    NetIn, Outs, Holdings, NextSerial) :-
    Id? =?= Id1? |
    handle_response(Resp?, From?,
        Outs?, Outs1, NetIn?, NetIn1),
    agent(Id?, UserIn?, NetIn1?, Outs1?,
        Holdings?, NextSerial?).

%% Cold-call from network
agent(Id, UserIn,
    [msg(Id1, intro(From, Resp?))|NetIn],
    Outs, Holdings, NextSerial) :-
    Id? =?= Id1? |
    lookup_send('_user',
        msg(agent, '_user', befriend(From?, Resp)),
        Outs?, Outs1),
    agent(Id?, UserIn?, NetIn?, Outs1?,
        Holdings?, NextSerial?).

%% Trade command
agent(Id, [msg('_user', Id1,
    trade(Target, GiveSpec, WantSpec))|UserIn],
    NetIn, Outs, Holdings, NextSerial) :-
    Id? =?= Id1?, ground(Target?),
    ground(GiveSpec?), ground(WantSpec?) |
    do_trade(Id?, Target?, GiveSpec?, WantSpec?,
        Holdings?, UserIn?, NetIn?, Outs?,
        NextSerial?).

%% Trade complete (injected — accept)
agent(Id,
    [trade_complete(From, TheirBonds)|UserIn],
    NetIn, Outs, Holdings, NextSerial) :-
    ground(Id?), ground(From?),
    ground(TheirBonds?) |
    append(Holdings?, TheirBonds?, NewHoldings),
    lookup_send('_user',
        msg(agent, '_user', trade_completed(From?)),
        Outs?, Outs1),
    agent(Id?, UserIn?, NetIn?, Outs1?,
        NewHoldings?, NextSerial?).

%% Trade returned (injected — decline)
agent(Id,
    [trade_returned_bonds(From, OurBonds)|UserIn],
    NetIn, Outs, Holdings, NextSerial) :-
    ground(Id?), ground(From?),
    ground(OurBonds?) |
    append(Holdings?, OurBonds?, NewHoldings),
    lookup_send('_user',
        msg(agent, '_user', trade_returned(From?)),
        Outs?, Outs1),
    agent(Id?, UserIn?, NetIn?, Outs1?,
        NewHoldings?, NextSerial?).

%% Trade returned with menu
agent(Id, [trade_returned_bonds_menu(From,
    OurBonds, Menu)|UserIn],
    NetIn, Outs, Holdings, NextSerial) :-
    ground(Id?), ground(From?),
    ground(OurBonds?), ground(Menu?) |
    append(Holdings?, OurBonds?, NewHoldings),
    lookup_send('_user',
        msg(agent, '_user',
            trade_returned_menu(From?, Menu?)),
        Outs?, Outs1),
    agent(Id?, UserIn?, NetIn?, Outs1?,
        NewHoldings?, NextSerial?).

%% Incoming trade_propose
agent(Id, UserIn, [msg(From, Id1,
    trade_propose(WantSpec, OfferedBonds,
        TradeResp?))|NetIn],
    Outs, Holdings, NextSerial) :-
    Id? =?= Id1?, ground(From?),
    ground(WantSpec?), ground(OfferedBonds?) |
    classify_trade(Id?, WantSpec?,
        OfferedBonds?, TradeClass),
    trade_dispatch(TradeClass?, Id?, From?,
        WantSpec?, OfferedBonds?, TradeResp,
        Holdings?, UserIn?, NetIn?, Outs?,
        NextSerial?).

%% Accept trade
agent(Id, [msg('_user', Id1,
    accept_trade(From,
        trade_pending(TradeResp?, WantSpec,
            OfferedBonds)))|UserIn],
    NetIn, Outs, Holdings, NextSerial) :-
    Id? =?= Id1?, ground(From?),
    ground(WantSpec?), ground(OfferedBonds?) |
    select_bonds_by_spec(WantSpec?, Holdings?,
        Status, Selected, Remaining),
    handle_trade_fill(Status?, Id?, From?,
        TradeResp, OfferedBonds?, Selected?,
        Remaining?, UserIn?, NetIn?, Outs?,
        NextSerial?).

%% Reject trade
agent(Id, [msg('_user', Id1,
    reject_trade(From,
        trade_pending(TradeResp?, _,
            OfferedBonds)))|UserIn],
    NetIn, Outs, Holdings, NextSerial) :-
    Id? =?= Id1?, ground(From?),
    ground(OfferedBonds?) |
    bind_trade_decline(TradeResp, OfferedBonds?),
    agent(Id?, UserIn?, NetIn?, Outs?,
        Holdings?, NextSerial?).

%% Deposit escrow
agent(Id, [msg('_user', Id1,
    deposit_escrow(Target, GiveSpec,
        ReleaseTime))|UserIn],
    NetIn, Outs, Holdings, NextSerial) :-
    Id? =?= Id1?, ground(Target?),
    ground(GiveSpec?), ground(ReleaseTime?) |
    do_deposit_escrow(Id?, Target?, GiveSpec?,
        ReleaseTime?, Holdings?, UserIn?, NetIn?,
        Outs?, NextSerial?).

%% Escrow expired (injected)
agent(Id, [escrow_dep_expired(Target)|UserIn],
    NetIn, Outs, Holdings, NextSerial) :-
    ground(Id?) |
    lookup_send('_user',
        msg(agent, '_user',
            escrow_expired(Target?)),
        Outs?, Outs1),
    agent(Id?, UserIn?, NetIn?, Outs1?,
        Holdings?, NextSerial?).

%% Escrow returned (injected)
agent(Id,
    [escrow_dep_returned(Target, OurBonds)|UserIn],
    NetIn, Outs, Holdings, NextSerial) :-
    ground(Id?), ground(Target?),
    ground(OurBonds?) |
    append(Holdings?, OurBonds?, NewHoldings),
    lookup_send('_user',
        msg(agent, '_user',
            escrow_returned(Target?)),
        Outs?, Outs1),
    agent(Id?, UserIn?, NetIn?, Outs1?,
        NewHoldings?, NextSerial?).

%% Cancel escrow
agent(Id, [msg('_user', Id1,
    cancel_escrow(Target,
        escrow_pending(CancelSignal?)))|UserIn],
    NetIn, Outs, Holdings, NextSerial) :-
    Id? =?= Id1?, ground(Target?) |
    bind_escrow_cancel(CancelSignal),
    agent(Id?, UserIn?, NetIn?, Outs?,
        Holdings?, NextSerial?).

%% Incoming escrow offer
agent(Id, UserIn, [msg(From, Id1,
    escrow_offer(ReleaseTime, BenResult))|NetIn],
    Outs, Holdings, NextSerial) :-
    Id? =?= Id1?, ground(From?),
    ground(ReleaseTime?) |
    inject_escrow_ben_result(BenResult?, From?,
        UserIn?, UserIn1),
    lookup_send('_user',
        msg(agent, '_user',
            escrow_received(From?, ReleaseTime?)),
        Outs?, Outs1),
    agent(Id?, UserIn1?, NetIn?, Outs1?,
        Holdings?, NextSerial?).

%% Escrow released (injected)
agent(Id,
    [escrow_ben_released(From, Bonds)|UserIn],
    NetIn, Outs, Holdings, NextSerial) :-
    ground(Id?), ground(From?), ground(Bonds?) |
    append(Holdings?, Bonds?, NewHoldings),
    lookup_send('_user',
        msg(agent, '_user',
            escrow_released(From?)),
        Outs?, Outs1),
    agent(Id?, UserIn?, NetIn?, Outs1?,
        NewHoldings?, NextSerial?).

%% Escrow cancelled (injected)
agent(Id, [escrow_ben_cancelled(From)|UserIn],
    NetIn, Outs, Holdings, NextSerial) :-
    ground(Id?) |
    lookup_send('_user',
        msg(agent, '_user',
            escrow_cancelled(From?)),
        Outs?, Outs1),
    agent(Id?, UserIn?, NetIn?, Outs1?,
        Holdings?, NextSerial?).

%% Termination
agent(_, [], _, Outs, _, _) :-
    close_outputs(Outs?).
\end{verbatim}
}
\section{Worked Example: Liquidity Ratios in the Village Market}\label{app:liquidity-example}

We compute the cash, quick, and current ratios of Section~\ref{sec:liquidity} for two agents of the village market scenario (Section~\ref{sec:village-demo}), at the snapshot immediately after step~2 (asymmetric credit), with both agents' local date $d^* = 0$, near-term horizon $\delta = 90$ days, and operating cycle $\Delta = 1$ year.  At this point the relevant bond holdings are as follows.

Bob holds 15~alice-coins (from step~1's symmetric mutual credit with Alice) and 20~diana-coins (from the loan Diana extended in step~2), all mature: 35~foreign coins.  Bob has issued 15~bob-coins (held by Alice) and 24~bob-bonds (held by Diana, maturing day~25); all 39~bob-bonds mature within $\Delta$, so the current liabilities are 39.

Diana holds 24~bob-bonds (maturing day~25) and 18~frank-bonds (maturing day~28), 42~immature foreign bonds in total, with no mature foreign bonds.  Diana has issued 20~diana-coins (held by Bob) and 15~diana-coins (held by Frank); all 35~diana-coins are mature and hence within $\Delta$, so the current liabilities are 35.

Applying the formulas with common denominator (current liabilities):
$$
\begin{array}{lll}
\textit{Cash Ratio of Bob}    &= 35/39 &\approx 0.90 \\
\textit{Quick Ratio of Bob}   &= 35/39 &\approx 0.90 \\
\textit{Current Ratio of Bob} &= 35/39 &\approx 0.90 \\[4pt]
\textit{Cash Ratio of Diana}    &= 0/35  &= 0 \\
\textit{Quick Ratio of Diana}   &= 42/35 &\approx 1.20 \\
\textit{Current Ratio of Diana} &= 42/35 &\approx 1.20
\end{array}
$$

Bob's three ratios coincide: he holds only mature foreign bonds, so widening the horizon adds nothing to his numerator.  His common value of 0.90 reflects that his 35~foreign coins fall short of his 39~bob-bonds in current liabilities.

Diana's cash ratio is zero---she holds no foreign coins on hand---but her quick and current ratios both reach 1.20, reflecting full coverage of her current liabilities by the bob-bonds and frank-bonds she will collect within the operating cycle.  Her quick and current ratios coincide because all foreign bonds she holds mature within $\delta$; in a scenario with longer-dated holdings the current ratio would exceed the quick.

\end{document}